\begin{document}

 \def\BibTeX{{\rm B\kern-.05em{\sc i\kern-.025em b}\kern-.08em
     T\kern-.1667em\lower.7ex\hbox{E}\kern-.125emX}}
 \newtheorem{theorem}{Theorem}
 \newtheorem{lemma}{Lemma}

\def\QED{\mbox{\rule[0pt]{1.5ex}{1.5ex}}}
\def\proof{\noindent\hspace{2em}{\it Proof: }}
\def\endproof{\hspace*{\fill}~\QED\par\endtrivlist\unskip}


\def\va{{\bf a}} \def\vb{{\bf b}} \def\vc{{\bf c}} \def\vd{{\bf d}}
\def\ve{{\bf e}} \def\vf{{\bf f}} \def\vg{{\bf g}} \def\vh{{\bf h}}
\def\vi{{\bf i}} \def\vj{{\bf j}} \def\vk{{\bf k}} \def\vl{{\bf l}}
\def\vm{{\bf m}} \def\vn{{\bf n}} \def\vo{{\bf o}} \def\vp{{\bf p}}
\def\vq{{\bf q}} \def\vr{{\bf r}} \def\vs{{\bf s}} \def\vt{{\bf t}}
\def\vu{{\bf u}} \def\vv{{\bf v}} \def\vw{{\bf w}} \def\vx{{\bf x}}
\def\vy{{\bf y}} \def\vz{{\bf z}}

\def\vA{{\bf A}} \def\vB{{\bf B}} \def\vC{{\bf C}} \def\vD{{\bf D}}
\def\vE{{\bf E}} \def\vF{{\bf F}} \def\vG{{\bf G}} \def\vH{{\bf H}}
\def\vI{{\bf I}} \def\vJ{{\bf J}} \def\vK{{\bf K}} \def\vL{{\bf L}}
\def\vM{{\bf M}} \def\vN{{\bf N}} \def\vO{{\bf O}} \def\vP{{\bf P}}
\def\vQ{{\bf Q}} \def\vR{{\bf R}} \def\vS{{\bf S}} \def\vT{{\bf T}}
\def\vU{{\bf U}} \def\vV{{\bf V}} \def\vW{{\bf W}} \def\vX{{\bf X}}
\def\vY{{\bf Y}} \def\vZ{{\bf Z}}
\def\v0{{\bf 0}}

\def\btheta{{\mbox{\boldmath $\theta$}}}
\def\Beta{{\mbox{\boldmath $\eta$}}}
\def\lam{{\mbox{\boldmath $\Gamma$}}}
\def\bomega{{\mbox{\boldmath $\omega$}}}
\def\bxi{{\mbox{\boldmath $\xi$}}}
\def\brho{{\mbox{\boldmath $\rho$}}}
\def\bmu{{\mbox{\boldmath $\mu$}}}
\def\bnu{{\mbox{\boldmath $\nu$}}}
\def\btau{{\mbox{\boldmath $\tau$}}}
\def\bphi{{\mbox{\boldmath $\phi$}}}
\def\bsigma{{\mbox{\boldmath $\Sigma$}}}
\def\bLambda{{\mbox{\boldmath $\Lambda$}}}
\def\btheta{{\mbox{\boldmath $\theta$}}}
\def\bomega{{\mbox{\boldmath $\omega$}}}
\def\brho{{\mbox{\boldmath $\rho$}}}
\def\bmu{{\mbox{\boldmath $\mu$}}}
\def\bGamma{{\mbox{\boldmath $\Gamma$}}}
\def\bnu{{\mbox{\boldmath $\nu$}}}
\def\btau{{\mbox{\boldmath $\tau$}}}
\def\bphi{{\mbox{\boldmath $\phi$}}}
\def\bPhi{{\mbox{\boldmath $\Phi$}}}
\def\bxi{{\mbox{\boldmath $\xi$}}}
\def\bvarphi{{\mbox{\boldmath $\varphi$}}}
\def\bepsilon{{\mbox{\boldmath $\epsilon$}}}
\def\balpha{{\mbox{\boldmath $\alpha$}}}
\def\bvarepsilon{{\mbox{\boldmath $\varepsilon$}}}
\def\bXsi{{\mbox{\boldmath $\Xi$}}}

\def\eg{{\it e.g.}}
\def\ie{{\it i.e.}}
\def\spec{S(e^{j 2 \pi\omega}, e^{j 2\pi\nu})}
\def\tr{{\rm tr}}
\def\defines{\stackrel{{\Delta}}{{=}} }
\def\dukt{\frac {\partial \bmu^T} {\partial \theta_k}}
\def\dul{\frac {\partial \bmu} {\partial \theta_\ell}}
\def\ilam{\lam^{-1} }
\def\btheta{{\mbox{\boldmath $\theta$}}}
\def\bTheta{{\mbox{\boldmath $\Theta$}}}

\def\dlamsig{\frac {\partial \lam} {\partial \sigma^2} }
\def\dlamtheti{\frac {\partial \lam} {\partial \btheta_i} }

\def\mt{{\cal M}_2(y(n,m),\tau_n,\tau_m)}
\def\mtd{{\cal M}_2(\cdot,\tau_n,\tau_m)}
\def\drbk{\frac {\partial \lam_{{\rm PI}}} {\partial \vb_{k}}}
\def\mte{{\cal M}_2(e_i^{(\alpha ,\beta)}  (n,m) ,\tau_n,\tau_m)}
\def\mteb{{\cal M}_2(e^{(2,-1)}  (n,m) ,\tau_n,\tau_m)}
\def\wmte{{\widehat{\cal FM}}_2(e_i^{(\alpha ,\beta)}   (n,m) ;\tau_n,\tau_m, \omega, \phi)}
\def\hmte{\widehat{{\cal M}}_2(e_i^{(\alpha ,\beta)}  (n,m) ,\tau_n,\tau_m)}
\def\hmty{\widehat{{\cal M}}_2(y(n,m) ,\tau_n,\tau_m)}
\def\wmt{{\widehat{\cal FM}}_2(y(n,m) ;\tau_n,\tau_m, \omega, \phi )}
\def\dwmt{{\widehat{\cal FM}}_2(\cdot ;\tau_n,\tau_m, \omega, \phi )}
\def\fmte{{\cal FM}_2(e_i^{(\alpha ,\beta)}  (n,m) ;\tau_n,\tau_m, \omega, \phi )}
\def\fmtea{{\cal FM}_2(e_1^{(\alpha ,\beta)}  (n,m) ;\tau_n,\tau_m, \omega, \phi )}
\def\trmt{{{\cal FM}}_2(y(n,m) ;\tau_n,\tau_m, \omega, \phi )}
\def\wmtd{{{\cal FM}}_2(\cdot \ ;\tau_n,\tau_m, \omega, \phi )}
\def\tmte{{\widetilde{\cal FM}}_2(e_i^{(\alpha ,\beta)} (n,m) ;\tau_n,\tau_m, \omega, \phi)}
\def\dmte{{\cal M}_2(\cdot,\tau_n,\tau_m)}
\def\hdmte{\widehat{{\cal M}}_2(\cdot ,\tau_n,\tau_m)}
\def\hezi{\frac {1} {2}}
\def\ducl {\frac {\partial \bar \vh} {\partial \bar \vc_\ell}}
\def\drbk{\frac {\partial \lam_{{\rm PI}}} {\partial \vb_{k}}}
\def\drbone{\frac {\partial \lam_{{\rm PI}}} {\partial \vb_{1}}}
\def\drbtwo{\frac {\partial \lam_{{\rm PI}}} {\partial \vb_{2}}}
\def\drbl{\frac {\partial \lam_{{\rm PI}}} {\partial \vb_{\ell}}}
\def\dhoml{{ \rm diag} (\btau_1) \left (D_\ell \vH^R_\ell - C_\ell \vH^I_\ell  \right )}
\def\dhnul{{ \rm diag} (\btau_2) \left (D_\ell \vH^R_\ell - C_\ell \vH^I_\ell  \right )}
\def\dhomkt{\left (D_k \vH^R_k - C_k \vH^I_k  \right )^T{ \rm diag} (\btau_1) }
\def\dhnukt{\left (D_k \vH^R_k - C_k \vH^I_k  \right )^T { \rm diag} (\btau_2) }
\def\drpiabk{\frac {\partial \lam_i^{(\alpha,\beta)}} {\partial {[\bphi_i^{(\alpha,\beta)}}]_{k}}}
\def\drpjgdl{\frac {\partial \lam_j^{(\gamma,\delta)}} {\partial {[\bphi_j^{(\gamma,\delta)}}]_{\ell}}}
\def\dirsig{\frac {\partial (\vR_i^{(\alpha,\beta)})^{-1}} {\partial (\sigma_i^{(\alpha,\beta)})^2}}
\def\drsig{\frac {\partial \vR_i^{(\alpha,\beta)}} {\partial (\sigma_i^{(\alpha,\beta)})^2}}
\def\dranr{\frac {\partial (\vR_i^{(\alpha,\beta)})^{-1}} {\partial [ {\rm Re} \{ a_i^{(\alpha,\beta)}(n)\}]}}
\def\drani{\frac {\partial \vR_i^{(\alpha,\beta)}} {\partial a_i^{(\alpha,\beta)}(n)}}
\def\speci{S_i^{(\alpha,\beta)}({\rm e}^{j \omega})}
\def\drrho{\frac {\partial \vR_i^{(\alpha,\beta)}} {\partial (\rho_i^{(\alpha,\beta)})^2}}

\def\graph#1#2#3#4#5{
\begin{figure}
\vspace*{#3}
\begin{center}
\makebox[#2][l]{\special{ps: #1 x=#2 y=#3}}
\end{center}
\begin{center}
\parbox{#2}{\caption{#5}\label{#4}}
\end{center}
\end{figure}}

\newcommand{\ugraph}[5]{
\begin{figure}
\begin{center}
\makebox[#2][l]{\psfig{figure=#1,width=#2,height=#3}}
\end{center}
\begin{center}
\parbox{#2}{\caption{#5}\label{#4}}
\end{center}
\end{figure}}

\title{Strongly Consistent Model Order Selection for Estimating 2-D Sinusoids in Colored
Noise}
\author{Mark  Kliger and Joseph M. Francos
\thanks{M.  Kliger is with the Department of Electrical Engineering and Computer Science
, University of Michigan, Ann Arbor, MI, 48109-2122, USA. Tel: (734) 647-8389
FAX: (734) 763-8041, email: mkliger@umich.edu. \newline
J. M.~Francos is with the Department of Electrical and Computer
Engineering, Ben-Gurion University, Beer-Sheva 84105, Israel. Tel:
+972 8 6461842, FAX: +972 8 6472949, email: francos@ee.bgu.ac.il}
}

\date{}

\maketitle

\begin{abstract}{\bf
We consider the problem of jointly estimating the number as well as
the  parameters of two-dimensional  sinusoidal signals, observed in
the presence of an additive  colored noise field.  We begin by
elaborating on { the least squares} estimation of 2-D sinusoidal
signals, when the assumed number of sinusoids is incorrect. In the
case where the number of sinusoidal signals is under-estimated we
show the almost sure convergence of the least squares estimates to
the parameters of the dominant sinusoids. In the case where this
number is over-estimated, the estimated parameter vector obtained by
the least squares estimator contains a sub-vector that converges
almost surely to the correct parameters of the sinusoids. Based on
these results, we prove the strong consistency of a new model order
selection rule.}
\end{abstract}

{ \bf Keywords:} Two-dimensional random fields; model order
selection; least squares estimation; strong consistency.

\vspace{2in}


\pagebreak
 \setcounter{page}{1}

%
%

\section{Introduction}

We consider the problem of jointly estimating the number as well as
the  parameters of two-dimensional  sinusoidal signals, observed in
the presence of an additive noise field. This problem is, in fact, a
special case of a much more general problem, \cite{evcrb}: From the
2-D Wold-like decomposition  we have that any 2-D regular and
homogeneous discrete random field can be represented as a sum of two
mutually orthogonal components: a purely-indeterministic field and a
deterministic one. In this paper we consider the special case where
the deterministic component consists of a finite (unknown) number of
sinusoidal components, while the purely-indeterministic component is
an infinite order non-symmetrical half plane, (or a quarter-plane),
moving average field. This modeling and estimation problem has
fundamental theoretical importance, as well as various applications
in texture estimation of images (see, e.g., \cite{francos2} and the
references therein) and  in  wave propagation problems (see, e.g.,
\cite{ward} and the references therein).

Many algorithms have been devised to estimate the parameters of
sinusoids observed in white noise and only a small fraction of the
derived methods has been extended to the case where the noise field
is colored (see, \eg, Francos {\it et. al.} \cite{francos1}, He
\cite{He1}, Kundu and Nandi \cite{kundu}, Li and Stoica
\cite{stoica2}, Zhang and Mandrekar \cite{zhang}, and the references
therein). Most of these assume the number of sinusoids is {\it
a-priori} known. However this assumption does not always hold in
practice. In the past three decades  the problem of model order
selection for 1-D signals has received considerable attention. In
general, model order selection rules are based (directly or
indirectly) on three
popular criteria: Akaike information criterion (AIC), 
the minimum description length (MDL), 
and the maximum a-posteriori probability criterion (MAP).
All these criteria have a common form composed of two terms: a data
term and a penalty term, where the data term is the log-likelihood
function evaluated for the assumed model. The problem of modelling
multidimensional fields has received much less attention. In
\cite{map_paper03}, a MAP model order selection criterion for
jointly estimating the number and the parameters of two-dimensional
sinusoids observed in the presence of an additive white Gaussian
noise field, is derived. In \cite{map_consis}, we proved the strong
consistency of a large family of model order selection rules, which
includes the MAP based rule in \cite{map_paper03} as a special case.

In this paper we derive a strongly consistent model order selection
rule, for  jointly estimating the number of sinusoidal components
and their parameters in the presence of colored noise. This
derivation extends the results of \cite{map_consis} to the case
where the additive noise is colored, modeled by an infinite order
non-symmetrical half-plane or quarter-plane moving average
representation, such that the noise field is not necessarily
Gaussian. To the best of our knowledge this is the most general
result available in the area of model-order selection rules of 2-D
random fields with mixed spectrum.


The proposed criterion has the usual form of a data term and a
penalty term, where the first is the {\it least squares estimator}
evaluated for the assumed model order and the latter is proportional
to the logarithm of the data size.

Since we evaluate the data term for any assumed model order,
including incorrect ones, we should consider the problem of least
squares estimation of the parameters of 2-D sinusoidal signals when
the assumed number of sinusoids is incorrect. Let $P$ denote the
number of sinusoidal signals in the observed field and let $k$
denote their assumed number. In the case where the number of
sinusoidal signals is under-estimated, \ie, $k < P$,  we prove the
almost sure convergence of the least squares estimates to the
parameters of the $k$ dominant sinusoids. In the case where the
number of sinusoidal signals is over-estimated, \ie, $k > P$,  we
prove the almost sure convergence of the estimates obtained by the
least squares estimator to the parameters of the $P$ sinusoids in
the observed field. The additional $k-P$ components assumed to
exist, are assigned by the least squares estimator to the dominant
components of the periodogram of the noise field.

Finally, using this result, we prove the strong consistency of a new
model order selection criterion and show how different assumptions
regarding a noise field parameters affect the penalty term of the
criterion. The proposed criterion completely generalized the
previous results \cite{map_paper03}, \cite{map_consis}, and provides
a strongly consistent estimator of the number as well as of the
parameters of the sinusoidal components.

%
%

\section{Notations, Definitions and Assumptions}

Let $\{y(n,m)\}$ be a real valued field,
\begin{equation}
y(n,m) = \sum\limits_{i = 1}^P \rho_i^0 \cos (\omega _i^0 n +
\upsilon _i^0 m+\varphi_i^0) + w(n,m), \label{ee1}
\end{equation}
where  $0 \le n \le N - 1$, $0 \le m \le M - 1$ and for each $i$,
$\rho_i^0$ is non-zero. Due to physical considerations it is further
assumed that for each $i$, $|\rho_i^0|$ is bounded .

Recall that the {\it non-symmetrical half-plan total-order} is
defined by
\begin{eqnarray}
(i,j)\succeq (s,t) \ \mbox{iff} \ \ (i,j)\in \left\{(k,l)|k=s,l\geq
t\right\}\cup \left\{(k,l)|k>s,-\infty \leq l\leq \infty\right\}.
\end{eqnarray}

Let $D$ be an \emph{infinite} order non-symmetrical half-plane
support, defined by
\begin{equation}
D=\left\{(i,j)\in \mathbb{Z}^2: i=0, 0 \leq j \leq \infty
\right\}\cup \left\{(i,j)\in \mathbb{Z}^2: 0<i\leq \infty, -\infty
\leq j \leq \infty \right\}. \label{d}
\end{equation}
Hence the notations $(r,s) \in D$ and $(r,s) \succeq (0,0)$ are
equivalent.

We assume that $\{w(n,m)\}$
 is an infinite order non-symmetrical half-plane MA noise field, \ie,
\begin{equation}
w(n,m) = \sum_{(r,s) \in D}a(r,s) u(n-r,m-s),  \label{e3}
\end{equation}
such that the following assumptions are satisfied:
%
%

\textbf{ Assumption 1:}   The field $\{u(n,m)\}$ is an i.i.d. real
valued zero-mean random field with finite variance $\sigma^2$, such
that ${ E}[|u(n,m)|^\alpha]< \infty $ for some $\alpha>3$ .

\textbf{ Assumption 2:} The sequence ${a(i,j)}$ is an absolutely
summable deterministic sequence, \ie,
\begin{equation}
\sum_{(r,s) \in D}|a(r,s)| < \infty.
\end{equation}
%

Let $f_w(\omega,\upsilon)$ denote the spectral density function of
the noise field $\{w(n,m)\}$. Hence,
\begin{equation}
f_w(\omega,\upsilon)=\sigma^2 \bigg |\sum_{(r,s) \in D} a(r,s)e^{
j(\omega r + \upsilon s)}\bigg|^2.
\end{equation}

\textbf{ Assumption 3:} The spatial frequencies $(\omega _i^0
,\upsilon _i^0 ) \in (0,2\pi )\times (0,2\pi )$, $1\le i \le P$ are
pairwise different. In other words, $\omega _i^0 \ne \omega _j^0 $
or $\upsilon _i^0 \ne \upsilon _j^0 $, when $i \ne j$.

Let $\{\Psi_i\}$ be a sequence of rectangles such that
$\Psi_i=\{(n,m)\in \mathbb{Z}^2\mid 0 \le n \le N_i - 1, 0 \le m \le
M_i - 1\}$.

{\bf Definition 1}:  The sequence of subsets $\{\Psi_i\}$ is said to
tend to infinity (we adopt the notation $\Psi_i \rightarrow\infty $)
as $i\rightarrow\infty$  if
$$\lim\limits_{i\rightarrow\infty}\min(N_i,M_i)=\infty,$$ and
$$0<\lim\limits_{i\rightarrow\infty}(N_i/M_i)<\infty.$$ To simplify
notations, we shall omit in the following the subscript $i$. Thus,
the notation $\Psi(N,M) \rightarrow\infty $ implies that both $N$
and $M$ tend to infinity as functions of $i$, and at roughly the
same rate.

{\bf Definition 2}: Let $\Theta_k$ be a bounded and closed subset of
the $4k$ dimensional space $ \mathbb{R}^{k}\times ((0,2\pi )\times
(0,2\pi ))^k\times[0,2\pi)^k$ where for any vector $\theta _k =
(\rho_1,\omega _1 ,\upsilon _1 ,\varphi_1,\ldots ,\rho_k ,\omega _k
,\upsilon _k,\varphi_k )\in \Theta_k$ the coordinate $\rho_i$ is
non-zero and bounded for every $1\le i \le k$ while the pairs
$(\omega _i ,\upsilon _i)$ are pairwise different, so that no two
regressors coincide. We shall refer to $\Theta_k$ as the
{\textit{parameter space}}.

From the model definition (\ref{ee1}) and the above assumptions it
is clear that $$\theta _k^0 = (\rho_1^0,\omega _1^0 ,\upsilon _1^0
,\varphi_1^0,\ldots ,\rho_k^0 ,\omega _k^0 ,\upsilon
_k^0,\varphi_k^0 )\in \Theta_k.$$

Define the loss function due to the error of the $k$-th order
regression model
\begin{equation}
{\cal L}_k (\theta_k)= \frac{1}{NM}\sum\limits_{n = 0}^{N - 1}
\sum\limits_{m=0}^{M - 1} \bigg( y(n,m) - \sum\limits_{i = 1}^k
\rho_i^0 \cos (\omega _i^0 n + \upsilon _i^0 m+\varphi_i^0)\bigg)^2
. \label{lsf}
\end{equation}

A vector $\hat{\theta _k} \in \Theta_k$ that minimizes  ${\cal L}_k
(\theta_k)$ is called the {\textit{Least Square Estimate}} (LSE). In
the case where $k=P$, the LSE is a {\it strongly consistent}
estimator of $\theta_P^0$ (see, \eg, \cite{kundu} and the references
therein).

%
%

\section {Strong Consistency of the Over- and Under-Determined
LSE}

In the following subsections we establish the strong consistency of
this LSE when the number of sinusoids is under-estimated, or
over-estimated. The first theorem establishes the strong consistency
of the least squares estimator in the case where the number of the
regressors is lower than the actual number of sinusoids. The second
theorem establishes the strong consistency of the least squares
estimator in the case where the number of the regressors is higher
than the actual number of sinusoids.

%
%

\subsection{Consistency of the LSE for an
Under-Estimated Model Order}

Let $k$ denote the assumed number of observed 2-D sinusoids, where
$k<P$. For any $\delta>0$, define the set $\Delta _\delta $ to be
a subset of the parameter space $\Theta_k $ such that each vector
$\theta _k \in \Delta _\delta$ is different from the vector
$\theta _k^0 $ by at least $\delta$, at least in one of its
coordinates, \ie,

\begin{equation}
\Delta _\delta = \left[ {\bigcup\limits_{i = 1}^k {{\cal
R}_{i\delta } } } \right]\cup\left[ {\bigcup\limits_{i = 1}^k
{\Phi_{i\delta } } } \right] \cup \left[ {\bigcup\limits_{i = 1}^k
{W_{i\delta } } } \right] \cup \left[ {\bigcup\limits_{i = 1}^k
{V_{i\delta } } } \right] \ ,
\end{equation}
where
\begin{eqnarray}
&&{\cal R}_{i\delta } = \left\{ {\theta _k \in \Theta_k :\;\vert
\rho_i - \rho_i^0 \vert \ge \delta ;\delta > 0\;} \right\} \ ,
\nonumber \\
&&\Phi_{i\delta } = \left\{ {\theta _k \in \Theta_k :\;\vert
\varphi_i - \varphi_i^0 \vert \ge \delta ;\delta > 0\;} \right\} \ ,
\nonumber \\
&&W_{i\delta } = \left\{ {\theta _k \in \Theta_k :\;\vert \omega _i
- \omega _i^0 \vert \ge \delta ;\delta > 0\;} \right\} \ ,
\nonumber \\
&&V_{i\delta } = \left\{ {\theta _k \in \Theta_k :\;\vert \upsilon
_i - \upsilon _i^0 \vert \ge \delta ;\delta > 0\;} \right\} \ .
\end{eqnarray}

To prove the main result of this section we shall need an
additional assumption and the following lemmas:

\textbf{ Assumption 4:} For convenience, and without loss of
generality, we assume that the sinusoids are indexed according to a
descending order of their amplitudes, \ie,
\begin{equation}
 \rho_1^0\ge \rho_2^0 \ge \ldots \rho_k^0
 > \rho_{k + 1}^0 \ldots \ge \rho_P^0
> 0 \ ,
\end{equation}
where we assume that for a given $k$, $ \rho_k^0  > \rho_{k + 1}^0
$ to avoid trivial ambiguities resulting from the case where the
$k$-th dominant component is not unique.

\begin{lemma}
\begin{equation}
\mathop {\liminf}\limits_{\Psi (N,M) \to \infty } \mathop {\inf
}\limits_{\theta _k \in \Delta _\delta } \left( {{\cal L}_k(\theta
_k ) - {\cal L}_k(\theta _k^0 )} \right) > 0 \ \ a.s. \label{l0}
\end{equation}
\end{lemma}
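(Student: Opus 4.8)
The plan is to split the increment ${\cal L}_k(\theta_k)-{\cal L}_k(\theta_k^0)$ into a purely deterministic (signal) part and a noise cross term, and treat the two separately. Writing $y(n,m)=s(n,m)+w(n,m)$ with $s(n,m)=\sum_{i=1}^P\rho_i^0\cos(\omega_i^0 n+\upsilon_i^0 m+\varphi_i^0)$, and letting $g(n,m;\theta_k)=\sum_{i=1}^k\rho_i\cos(\omega_i n+\upsilon_i m+\varphi_i)$ denote the fitted regression, the $\frac{1}{NM}\sum w^2$ term is common to both losses and cancels, so that
\begin{equation}
{\cal L}_k(\theta_k)-{\cal L}_k(\theta_k^0)=\Big[\widetilde{\cal L}_k(\theta_k)-\widetilde{\cal L}_k(\theta_k^0)\Big]+\frac{2}{NM}\sum_{n,m}w(n,m)\,b(n,m;\theta_k),
\end{equation}
where $\widetilde{\cal L}_k(\theta_k)=\frac{1}{NM}\sum_{n,m}(s-g(\cdot;\theta_k))^2$ is the signal-only loss and $b=g(\cdot;\theta_k^0)-g(\cdot;\theta_k)$ is a difference of at most $2k$ sinusoids of bounded amplitude. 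It then suffices to bound the deterministic bracket below by a positive constant and to show the cross term vanishes uniformly in $\theta_k$, almost surely.

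For the deterministic part, orthogonality of sinusoids of distinct frequencies gives, for each fixed $\theta_k$, $\widetilde{\cal L}_k(\theta_k)\to Q(\theta_k)$ as $\Psi(N,M)\to\infty$, where $Q(\theta_k)=\tfrac12\sum_{i=1}^P(\rho_i^0)^2+\tfrac12\sum_{j=1}^k\rho_j^2-\sum\rho_i^0\rho_j\cos(\varphi_i^0-\varphi_j)$, the last sum running over the pairs $(i,j)$ with $(\omega_j,\upsilon_j)=(\omega_i^0,\upsilon_i^0)$ (Assumption 3 and Definition 2 guarantee at most one such $i$ per $j$). Thus $Q(\theta_k)$ is one half the limiting energy of $s-g(\cdot;\theta_k)$, and minimizing it is a best $k$-term approximation of $s$ by a sum of $k$ distinct-frequency sinusoids. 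Since the $P$ true components are mutually orthogonal with energies $\tfrac12(\rho_i^0)^2$, the optimum retains the $k$ components of largest amplitude; Assumption 4 ($\rho_k^0>\rho_{k+1}^0$) makes this choice strict and unique, so $\theta_k^0$ is the unique minimizer of $Q$ over $\Theta_k$ with $Q(\theta_k^0)=\tfrac12\sum_{i=k+1}^P(\rho_i^0)^2$. As $\Delta_\delta$ is a closed subset of the compact $\Theta_k$ excluding a $\delta$-neighborhood of $\theta_k^0$, and $Q$ is lower semicontinuous, $\inf_{\theta_k\in\Delta_\delta}\big(Q(\theta_k)-Q(\theta_k^0)\big)=:c(\delta)>0$. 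The delicate point is that $Q$ is discontinuous exactly where a fitted frequency coincides with a true one, so $\widetilde{\cal L}_k\to Q$ is not uniform there; to pass to $\liminf_{\Psi}\inf_{\Delta_\delta}(\widetilde{\cal L}_k(\theta_k)-\widetilde{\cal L}_k(\theta_k^0))\ge c(\delta)$ I would argue by contradiction, extract from any violating sequence a convergent $\theta_k^{(\ell)}\to\theta_k^*\in\Delta_\delta$, and bound the finite-sample energy below by Fej\'er-kernel estimates, separating fitted frequencies that resonate with a true frequency from those bounded away.

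The main obstacle is the uniform almost-sure control of the noise cross term. Since $b$ is a bounded combination of at most $2k$ sinusoids, it suffices to prove
\begin{equation}
\sup_{(\omega,\upsilon,\varphi)}\Big|\frac{1}{NM}\sum_{n,m}w(n,m)\cos(\omega n+\upsilon m+\varphi)\Big|\longrightarrow 0\qquad a.s.
\end{equation}
as $\Psi(N,M)\to\infty$. Substituting the MA representation (\ref{e3}) and invoking the absolute summability of $\{a(r,s)\}$ (Assumption 2), this reduces to the same uniform convergence for the i.i.d. innovations $\{u(n,m)\}$ after controlling the convolution tail. For the innovation field I would first establish pointwise a.s. convergence through a strong law exploiting the moment condition $E|u(n,m)|^\alpha<\infty$, $\alpha>3$ (Assumption 1), and then upgrade it to uniform convergence over the compact frequency-phase cube by a covering/chaining argument together with a modulus-of-continuity bound on $\frac{1}{NM}\sum u\cos(\cdot)$ in $(\omega,\upsilon,\varphi)$, closing the argument with Borel--Cantelli along $\Psi(N,M)\to\infty$. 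The exponent $\alpha>3$ is precisely what renders the Borel--Cantelli series over the $O((NM)^2)$ grid of frequencies summable.

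Finally I would combine the two estimates: almost surely,
\begin{equation}
\liminf_{\Psi(N,M)\to\infty}\inf_{\theta_k\in\Delta_\delta}\big({\cal L}_k(\theta_k)-{\cal L}_k(\theta_k^0)\big)\ge c(\delta)-\limsup_{\Psi(N,M)\to\infty}\sup_{\theta_k}\Big|\frac{2}{NM}\sum_{n,m}w\,b\Big|=c(\delta)>0,
\end{equation}
which is the claim. The hard part is the uniform noise estimate; the orthogonality and compactness steps are routine once the discontinuity of $Q$ is handled.
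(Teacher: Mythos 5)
Your proposal follows essentially the same route as the paper's own proof: your split into a deterministic signal-only bracket plus a noise cross term is exactly the paper's $I_1+I_2$ versus $I_3$ decomposition in Appendix A, your key estimate $\sup_{\omega,\upsilon}\big|\frac{1}{NM}\sum_{n,m}w(n,m)\cos(\omega n+\upsilon m)\big|\to 0$ a.s. is precisely the paper's Lemma 3 (Appendix C), and your lower bound on the deterministic part via orthogonality, best-$k$-term approximation and the strict gap $\rho_k^0>\rho_{k+1}^0$ of Assumption 4 is what the paper carries out by its case analysis over ${\cal R}_{q\delta}$, $\Phi_{q\delta}$, $W_{q\delta}$, $V_{q\delta}$ (including your matched-frequency set, the paper's $\Lambda$). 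The only real divergence is internal to the noise lemma, where you sketch a covering/chaining plus Borel--Cantelli argument exploiting $\alpha>3$, whereas the paper proves it by MA truncation, a mean-square bound on the supremum, and a Doob-style subsequence-plus-interpolation argument that needs only finite variance (as the paper's remark after Theorem 1 points out).
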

\begin{proof}
See Appendix A for the proof.
\end{proof}

\begin{lemma}
\label{Wu} Let $\{ x_n,n \ge 1 \}$ be a sequence of random
variables. Then
\begin{equation}
\Pr \{x_n \le 0\ i.o. \} \le \Pr\{\mathop {\liminf }\limits_{n \to
\infty} x_n \le 0\},
\end{equation}
\end{lemma}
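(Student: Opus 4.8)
The plan is to prove the inequality by first establishing the corresponding inclusion of events at the level of sample paths, after which the probability inequality follows at once from the monotonicity of the measure $\Pr$. The first step is to rewrite the event of interest in its standard limit-superior form
\[
\{x_n \le 0 \ i.o.\} = \bigcap_{N=1}^{\infty} \bigcup_{n \ge N} \{x_n \le 0\},
\]
that is, the set of sample points $\omega$ for which $x_n(\omega) \le 0$ holds for infinitely many indices $n$.

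The core step is the pointwise inclusion $\{x_n \le 0 \ i.o.\} \subseteq \{\liminf_{n\to\infty} x_n \le 0\}$. I would fix an arbitrary $\omega$ in the left-hand event; by definition there is a subsequence $n_1 < n_2 < \cdots$, depending on $\omega$, along which $x_{n_j}(\omega) \le 0$ for every $j$. Recalling the representation $\liminf_{n\to\infty} x_n(\omega) = \sup_{N \ge 1} \inf_{n \ge N} x_n(\omega)$, I would observe that for each fixed $N$ the subsequence contains some index $n_j \ge N$, whence $\inf_{n \ge N} x_n(\omega) \le x_{n_j}(\omega) \le 0$. Taking the supremum over $N$ of quantities that are all nonpositive yields $\liminf_{n\to\infty} x_n(\omega) \le 0$, so $\omega$ belongs to the right-hand event; this establishes the inclusion.

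Finally, since $\Pr$ is monotone with respect to set inclusion, applying it to both sides immediately gives $\Pr\{x_n \le 0 \ i.o.\} \le \Pr\{\liminf_{n\to\infty} x_n \le 0\}$, as claimed.

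I expect no genuine obstacle here, as the statement is an elementary measure-theoretic fact; the only point requiring any care is the handling of the $\liminf$ through its $\sup_N \inf_{n \ge N}$ representation together with the observation that the inclusion is one-directional. The reverse inclusion fails in general --- for instance, a path with $x_n \downarrow 0$ and $x_n > 0$ for all $n$ lies in the right-hand event but not in the left --- which is precisely why the conclusion is stated as an inequality rather than an identity. The role of this lemma in the sequel is to convert the almost sure strict positivity of the liminf of the loss difference, established in Lemma 1, into the statement that the corresponding ``bad'' event occurs only finitely often, thereby driving the consistency argument.
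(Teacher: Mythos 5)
Your proof is correct and follows essentially the same route as the paper's: both establish the sample-path inclusion $\{x_n \le 0\ i.o.\} \subseteq \{\liminf_{n\to\infty} x_n \le 0\}$ via the representation $\liminf_{n\to\infty} x_n = \sup_{N}\inf_{n\ge N} x_n$ and then invoke monotonicity of the probability measure. The only cosmetic difference is that you argue pointwise with a subsequence while the paper phrases the same inclusion through set operations on the events $\bigcup_{n \ge m}\{x_n \le 0\}$.
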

where the abbreviation {\it i.o.} stands for {\it infinitely
often}.

\begin{proof}
See Appendix B for the proof.
\end{proof}

The next theorem establishes the strong consistency of the least
squares estimator in the case where the number of the regressors is
lower than the actual number of sinusoids.

\begin{theorem}
Let  Assumptions 1-4 be satisfied. Then, the $k$-regressor parameter
vector $\hat {\theta }_k = (\hat{\rho}_1,\hat{\omega} _1
,\hat{\upsilon} _1 ,\hat{\varphi}_1,\ldots ,\hat{\rho}_k
,\hat{\omega} _k ,\hat{\upsilon} _k,\hat{\varphi}_k )$ that
minimizes (\ref{lsf}) is a {\it strongly consistent} estimator of
$\theta _k^0 = (\rho_1^0,\omega _1^0 ,\upsilon _1^0
,\varphi_1^0,\ldots ,\rho_k^0 ,\omega _k^0 ,\upsilon
_k^0,\varphi_k^0 )$ as $\Psi (N,M) \to \infty $.
 That is,
\begin{equation}
\hat {\theta }_k \to \theta _k^0 \enskip a.s.\enskip as\enskip \Psi
(N,M) \to \infty .
\end{equation}
\end{theorem}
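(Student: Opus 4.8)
The plan is to run the standard Wu-type consistency argument: show that the minimizer $\hat{\theta}_k$ cannot land in the ``bad'' set $\Delta_\delta$ infinitely often, and then let $\delta\to 0$. The two preceding lemmas are tailored precisely for this. Lemma 1 supplies the analytic heart of the matter, namely that the loss gap is bounded away from zero, uniformly over $\Delta_\delta$, in the $\liminf$ as $\Psi(N,M)\to\infty$; Lemma 2 is the bookkeeping device that converts an ``infinitely often'' statement into a $\liminf$ statement to which Lemma 1 can be applied. The whole argument is essentially a short probabilistic chain once these two facts are in hand.

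First I would fix $\delta>0$ and, along the sequence $\Psi_i\to\infty$ (suppressing the index as agreed), introduce the random variable
\[
x=\inf_{\theta_k\in\Delta_\delta}\left({\cal L}_k(\theta_k)-{\cal L}_k(\theta_k^0)\right).
\]
The key observation is the containment $\{\hat{\theta}_k\in\Delta_\delta\}\subseteq\{x\le 0\}$ for each member of the sequence. Indeed, since $\hat{\theta}_k$ minimizes ${\cal L}_k$ over $\Theta_k\supseteq\Delta_\delta$ we always have ${\cal L}_k(\hat{\theta}_k)\le{\cal L}_k(\theta_k^0)$; and if in addition $\hat{\theta}_k\in\Delta_\delta$, then the infimum defining $x$ is taken over a set containing $\hat{\theta}_k$, so $x\le{\cal L}_k(\hat{\theta}_k)-{\cal L}_k(\theta_k^0)\le 0$.

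Passing to the corresponding ``infinitely often'' events and invoking Lemma 2 then yields
\[
\Pr\{\hat{\theta}_k\in\Delta_\delta\ i.o.\}\le\Pr\{x\le 0\ i.o.\}\le\Pr\{\liminf x\le 0\}.
\]
By Lemma 1 we have $\liminf x>0$ almost surely, so the right-hand side vanishes. Hence, almost surely, $\hat{\theta}_k\in\Delta_\delta$ only finitely often; equivalently, for all sufficiently large $\Psi(N,M)$ the estimate $\hat{\theta}_k$ lies in the complement of $\Delta_\delta$, i.e.\ $\|\hat{\theta}_k-\theta_k^0\|_\infty<\delta$ coordinatewise. To remove the dependence on $\delta$, I would apply this to the countable family $\delta=1/j$, $j=1,2,\ldots$, obtaining for each $j$ a probability-one event; intersecting these countably many events (still probability one) gives a single almost-sure event on which, for every $j$, eventually $\|\hat{\theta}_k-\theta_k^0\|_\infty<1/j$. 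This is exactly $\hat{\theta}_k\to\theta_k^0$ a.s.\ as $\Psi(N,M)\to\infty$.

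I do not expect the theorem itself to pose a real obstacle: all the difficulty has been pushed into Lemma 1, where Assumptions 1--4 and the non-symmetric half-plane colored-noise structure enter to secure the uniform separation of the loss. Granting both lemmas, the remaining argument is routine; the only point demanding a little care is the final passage from ``$\hat{\theta}_k\notin\Delta_\delta$ eventually, for each fixed $\delta$'' to genuine convergence, which the countable intersection over $\delta=1/j$ handles cleanly.
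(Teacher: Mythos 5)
Your proposal is correct and is essentially the paper's own argument: both rest on the minimizer inequality ${\cal L}_k(\hat{\theta}_k)\le{\cal L}_k(\theta_k^0)$, Lemma 2 to pass from an ``infinitely often'' event to a $\liminf$ event, and Lemma 1 to rule that event out. The only difference is organizational — the paper argues by contradiction (citing Chung's Theorem 4.2.2 to produce a $\delta$ with $\Pr(\hat{\theta}_k\in\Delta_\delta\ i.o.)>0$), whereas you argue directly for every $\delta$ and finish with a countable intersection over $\delta=1/j$, which is the same reasoning in contrapositive form.
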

\begin{proof}
The proof follows an argument proposed by Wu \cite{wu}, Lemma 1. Let
\newline $\hat {\theta }_k = (\hat{\rho}_1,\hat{\omega} _1 ,\hat{\upsilon}
_1 ,\hat{\varphi}_1,\ldots ,\hat{\rho_k} ,\hat{\omega} _k
,\hat{\upsilon} _k,\hat{\varphi}_k )$ be a parameter vector that
minimizes (\ref{lsf}). Assume that the proposition $\hat {\theta }_k
\to \theta _k^0 \enskip a.s.\enskip as\enskip \Psi (N,M) \to \infty
$ is not true. Then, there exists some $\delta>0$, such that
(\cite{chung}, Theorem 4.2.2, p. 69),
\begin{equation}
\Pr (\hat {\theta }_k \in \Delta _\delta \enskip i.o.) > 0.
\end{equation}
This inequality together with the definition of $\hat {\theta }_k$
as a vector that minimizes ${\cal L}_k$ implies
\begin{equation}
\Pr(\mathop {\inf }\limits_{\theta _k \in \Delta _\delta } \left(
{\cal L}_k(\theta _k ) - {\cal L}_k(\theta _k^0 ) \right)  \le 0
\enskip i.o.)>0.
\end{equation}

Using Lemma \ref{Wu} we obtain
\begin{equation}
\Pr(\mathop {\liminf}\limits_{\Psi (N,M) \to \infty } \mathop
{\inf }\limits_{\theta _k \in \Delta _\delta } \left( {{\cal
L}_k(\theta _k ) - {\cal L}_k(\theta _k^0 )} \right) \le 0 )\ge
\Pr(\mathop {\inf }\limits_{\theta _k \in \Delta _\delta } \left(
{{\cal L}_k(\theta _k ) - {\cal L}_k(\theta _k^0 )} \right) \le 0
\enskip i.o.)>0,
\end{equation}
which contradicts (\ref{l0}). Hence,
\begin{equation}
\hat {\theta }_k \to \theta _k^0 \enskip a.s.\enskip as\enskip
\Psi (N,M) \to \infty.
\end{equation}
\end{proof}

{\bf Remark}: Lemma 1 and Theorem 1 remain valid even under less
restrictive assumptions regarding the noise field $\{w(n,m)\}$. If
the field $\{u(n,m)\}$ is an i.i.d. real valued zero-mean random
field with finite variance $\sigma^2$, and the sequence ${a(i,j)}$
is a square summable deterministic sequence, \ie, $\sum_{(r,s) \in
D}a^2(r,s) < \infty$, then Lemma 1 and Theorem 1 hold.

%
%

\subsection{Consistency of the LSE for an Over-Estimated Model
Order}

Let $k$ denote the assumed number of observed 2-D sinusoids, where
$k>P$. Without loss of generality, we can assume that $k = P + 1$,
(as the proof for $k\ge P+1$ follows immediately by repeating the
same arguments). Let the periodogram (scaled by a factor of 2) of
the field $\{w(n,m)\}$ be given by
\begin{equation}
I_w (\omega ,\upsilon ) = \frac{2}{NM}\left| {\sum\limits_{n =
0}^{N - 1} {\sum\limits_{m = 0}^{M - 1} {w(n,m)e^{ - j(n\omega +
m\upsilon )}} } } \right|^2.
\end{equation}
The parameter spaces $\Theta _{P } $, $\Theta _{P + 1} $ are defined
as in Definition 2.

\begin{theorem}
Let  Assumptions 1-4 be satisfied. Then, the parameter vector
\newline $\hat {\theta }_{P + 1} = (\hat{\rho}_1,\hat{\omega} _1
,\hat{\upsilon} _1 ,\hat{\varphi}_1,\ldots ,\hat{\rho}_P
,\hat{\omega} _P ,\hat{\upsilon} _P,\hat{\varphi}_P,\hat{\rho}_{P+1}
,\hat{\omega} _{P+1} ,\hat{\upsilon} _{P+1},\hat{\varphi}_{P+1} )
\in \Theta _{P + 1} $ that minimizes (\ref{lsf}) with $k = P + 1$
regressors as $\Psi (N,M) \to \infty $ is composed of the vector
$\hat {\theta }_P = (\hat{\rho}_1,\hat{\omega} _1 ,\hat{\upsilon} _1
,\hat{\varphi}_1,\ldots ,\hat{\rho}_P ,\hat{\omega} _P
,\hat{\upsilon} _P,\hat{\varphi}_P )$ which is a {\it strongly
consistent} estimator of $\theta _P^0 = (\rho_1^0,\omega _1^0
,\upsilon _1^0 ,\varphi_1^0,\ldots ,\rho_P^0 ,\omega _P^0 ,\upsilon
_P^0,\varphi_P^0 )$ as $\Psi (N,M) \to \infty $;
of the pair of spatial frequencies $(\hat {\omega }_{P + 1} ,\hat
{\upsilon }_{P + 1} )$ that maximizes the periodogram of the
observed realization of the field $\{w(n,m)\}$, \ie,
\begin{equation}
(\hat {\omega }_{P + 1} ,\hat {\upsilon }_{P + 1} ) = \mathop {\arg
\max } \limits_{(\omega ,\upsilon ) \in {(0,2\pi)^2}} I_w (\omega
,\upsilon ),
\end{equation}
and of the element $\hat {\rho}_{P + 1} $ that satisfies
\begin{equation}
\hat {\rho}_{P + 1}^2 = \frac{2}{NM}I_w (\hat {\omega }_{P + 1}
,\hat {\upsilon }_{P + 1} )\ .
\end{equation}
\end{theorem}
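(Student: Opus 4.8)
The plan is to reduce the $(P+1)$-regressor least squares problem to the maximization of the periodogram of $y$ evaluated at the $P+1$ chosen frequency pairs, and then to analyze this maximization on two scales: a coarse $O(1)$ scale on which $P$ of the regressors lock onto the $P$ true sinusoids, and a fine $o(1)$ scale on which the remaining regressor fits the dominant peak of the noise periodogram.

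First I would concentrate ${\cal L}_{P+1}$ over the linear parameters, writing each term as $\rho_i\cos(\omega_i n+\upsilon_i m+\varphi_i)=a_i\cos(\omega_i n+\upsilon_i m)+b_i\sin(\omega_i n+\upsilon_i m)$ and minimizing over $(a_i,b_i)$ at fixed frequencies; this is the orthogonal projection of $y$ onto the column space of the $NM\times 2(P+1)$ regressor matrix $X$ built from these cosines and sines. The analytic input is a uniform asymptotic-orthogonality estimate: over the compact set of frequency configurations whose pairs are mutually separated, and separated from $\{0,2\pi\}$, by a fixed $\epsilon>0$, the scaled Gram matrix $\frac{2}{NM}X^TX$ tends to the identity uniformly. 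Hence the concentrated loss obeys, uniformly over such configurations,
\begin{equation}
{\cal L}^{*}_{P+1}=\frac{1}{NM}\sum_{n=0}^{N-1}\sum_{m=0}^{M-1}y(n,m)^2-\sum_{i=1}^{P+1}\frac{1}{NM}I_y(\omega_i,\upsilon_i)+o(1),
\end{equation}
so minimizing it is equivalent to maximizing $\sum_{i=1}^{P+1}\frac{1}{NM}I_y(\omega_i,\upsilon_i)$, where $I_y$ is the periodogram of $y$. A separate argument is needed to rule out near-degenerate minimizers (two fitted frequencies colliding, or one approaching $\{0,2\pi\}$); I would dispose of these by showing such a configuration wastes a regressor and is therefore strictly suboptimal.

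Next I would establish the periodogram decomposition. With $y=s+w$ and $s=\sum_{i=1}^P\rho_i^0\cos(\omega_i^0 n+\upsilon_i^0 m+\varphi_i^0)$, the normalized periodogram splits as $\frac{1}{NM}I_y=\frac{1}{NM}I_s+\frac{1}{NM}I_w+(\text{cross term})$. The deterministic part $\frac{1}{NM}I_s(\omega,\upsilon)$ is a sum of Fej\'er-type kernels tending to $(\rho_i^0)^2/2$ at each true frequency $(\omega_i^0,\upsilon_i^0)$ and to $0$ uniformly away from them. The crucial stochastic input is the uniform law of large numbers $\sup_{(\omega,\upsilon)}\frac{1}{NM}I_w(\omega,\upsilon)\to 0$ a.s.\ together with the uniform vanishing of the cross term; both reduce to $\sup_{(\omega,\upsilon)}\frac{1}{NM}\big|\sum_{n,m}w(n,m)e^{-j(n\omega+m\upsilon)}\big|\to 0$ a.s. This I would obtain by passing the absolutely summable MA filter (Assumption 2) through the analogous statement for the i.i.d.\ field $\{u(n,m)\}$, proved by a maximal/covering argument over a fine frequency grid: the moment bound $E|u|^\alpha<\infty$ with $\alpha>3$ (Assumption 1) supplies the summability needed for a Borel--Cantelli step on the grid, and an equicontinuity estimate controls the interpolation between grid points. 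I expect this uniform noise estimate to be the main obstacle, precisely because for colored, non-Gaussian noise neither a Gaussian maximal inequality nor a spectral-representation shortcut is available.

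With the decomposition in hand the three assertions follow on the two scales. On the $O(1)$ scale, $\frac{1}{NM}I_y$ exhibits $P$ spikes of fixed positive height $(\rho_i^0)^2/2$ over an $o(1)$ background; since $P+1$ frequencies are available, any maximizer must place $P$ of them arbitrarily close to the $P$ true frequencies, for otherwise the objective could be strictly increased by covering a missed spike. Feeding this back through the uniform loss-difference bound --- the argument of Lemma 1 and Theorem 1, reused almost verbatim on the $P$-dimensional subvector --- yields $\hat{\theta}_P\to\theta_P^0$ a.s. On the $o(1)$ scale, once the $P$ spikes are accounted for, the remaining regressor maximizes the residual background, which converges uniformly to $\frac{1}{NM}I_w$; hence $(\hat{\omega}_{P+1},\hat{\upsilon}_{P+1})\to\arg\max_{(\omega,\upsilon)}I_w(\omega,\upsilon)$. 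Finally, the single-sinusoid projection gives $\hat{\rho}_{P+1}^2=\hat a_{P+1}^2+\hat b_{P+1}^2=\frac{2}{NM}I_{\mathrm{res}}(\hat{\omega}_{P+1},\hat{\upsilon}_{P+1})$, and the uniform vanishing of the difference between the residual and noise periodograms reduces this to $\hat{\rho}_{P+1}^2=\frac{2}{NM}I_w(\hat{\omega}_{P+1},\hat{\upsilon}_{P+1})$, completing the proof.
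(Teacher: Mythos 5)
Your two-scale framing captures the right intuition, and the coarse-scale half of your argument (forcing $P$ of the $P+1$ fitted frequencies onto the true spikes and then recycling the Lemma 1/Theorem 1 machinery to get $\hat{\theta}_P\to\theta_P^0$) could be made to work. The genuine gap is a scale mismatch in the other half. Your concentration step delivers the loss only up to an additive $o(1)$ error, but the entire content of the second and third assertions lives at the scale $\frac{\log NM}{NM}$: by the Zhang--Mandrekar bound $\sup_{\omega,\upsilon}I_w(\omega,\upsilon)=O(\log NM)$ a.s.\ (\cite{zhang}), the quantity $\frac{1}{NM}I_w$ that the extra regressor is supposed to maximize is itself $O\left(\frac{\log NM}{NM}\right)$, and is therefore swallowed by any $o(1)$ error term. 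Saying the residual background ``converges uniformly to $\frac{1}{NM}I_w$'' is vacuous at this scale --- both sides converge uniformly to zero --- and argmaxes are not preserved under perturbations comparable to, let alone larger than, the quantity being maximized. The same mismatch undermines your treatment of near-degenerate configurations: the gain from placing the extra regressor at the noise peak is only $O\left(\frac{\log NM}{NM}\right)$, so ``wasting a regressor'' is not coarsely suboptimal, and near-collisions are exactly where your Gram matrix degenerates. Note also that Definition 2 and Assumption 3 permit two regressors to share one frequency coordinate; in that regime the cross-entries of $\frac{2}{NM}X^TX$ are $O(1/\min(N,M))$ rather than $O(1/NM)$, and $1/\min(N,M)$ is \emph{not} $o\left(\frac{\log NM}{NM}\right)$, so even a refined bookkeeping of your projection error fails precisely where it is needed.

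The paper avoids matrix inversion altogether: it expands ${\cal L}_{P+1}$ exactly as $H_1+H_2+H_3$ in (\ref{lp1}), where $H_1={\cal L}_P(\theta_P)$ involves only the first $P$ regressors, $H_2$ involves only the extra regressor and the noise, and $H_3$ collects every coupling term. The link to the periodogram is then exact rather than asymptotic --- differentiating $H_2$ shows its stationary points are those of $I_w$, with $H_2=-I_w/(NM)$ there --- so the minimum of $H_2$ is pinned at $O\left(\frac{\log NM}{NM}\right)$ by the Zhang--Mandrekar bound (\ref{h222}), while every term of $H_3$ is shown, via the elementary exponential-sum estimate (\ref{l4}) together with an explicit case analysis of frequency coincidences, to be $o\left(\frac{\log NM}{NM}\right)$ uniformly (\ref{t25}). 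Only because the coupling is $o$ of the fine scale does the minimization separate into $\min H_1$ (yielding consistency of $\hat{\theta}_P$) and $\min H_2$ (yielding the $I_w$-argmax and the amplitude relation). To salvage your route you would have to prove that the concentration error is $o\left(\frac{\log NM}{NM}\right)$ uniformly over the whole parameter space, including configurations with shared or nearly shared coordinates --- which is the paper's $H_3$ estimate in disguise, and is not implied by anything in your outline.
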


\vspace{.2in}

\begin{proof}
Let $\theta _{P + 1} = ({\rho}_1,{\omega} _1 ,{\upsilon} _1
,{\varphi}_1,\ldots ,{\rho}_P ,{\omega} _P ,{\upsilon}
_P,{\varphi}_P,{\rho}_{P+1} ,{\omega} _{P+1} ,{\upsilon}
_{P+1},{\varphi}_{P+1} ) $, be some vector in the parameter space
$\Theta _{P + 1} $.
 We have,
\begin{equation}
\begin{array}{l}
 {\cal L}_{P + 1} (\theta _{P + 1} ) = \frac{1}{NM}\sum\limits_{n = 0}^{N-1} \sum\limits_{m = 0}^{M-1}
\bigg( y(n,m) - \sum\limits_{i = 1}^{P+1} \rho_i \cos (\omega _i n
+
\upsilon _i m+\varphi_i)  \bigg)^2 \\
 =  \frac{1}{NM}\sum\limits_{n = 0}^{N-1} \sum\limits_{m = 0}^{M-1}
\bigg( y(n,m) - \sum\limits_{i = 1}^{P} \rho_i \cos (\omega _i n +
\upsilon _i m+\varphi_i)  \bigg)^2 \\
+ \frac{1}{NM}\sum\limits_{n
= 0}^{N - 1} \sum\limits_{m = 0}^{M - 1}\bigg(\rho_{P+1} \cos
(\omega _{P+1} n +
\upsilon _{P+1} m+\varphi_{P+1})  \bigg)^2 \\
 -  \frac{2}{NM}\sum\limits_{n = 0}^{N-1} \sum\limits_{m = 0}^{M-1}
\bigg( y(n,m) - \sum\limits_{i = 1}^{P} \rho_i \cos (\omega _i n +
\upsilon _i m+\varphi_i)  \bigg)\bigg(\rho_{P+1} \cos (\omega
_{P+1} n +
\upsilon _{P+1} m+\varphi_{P+1})  \bigg)\\
 = {\cal L}_P (\theta _P ) + \frac{\rho_{P+1}^2}{2}+\frac{1}{2NM}\sum\limits_{n
= 0}^{N - 1} \sum\limits_{m = 0}^{M - 1}\rho_{P+1}^2 \cos (2\omega
_{P+1} n + 2\upsilon _{P+1} m+2\varphi_{P+1})\\
- \frac{2}{NM}\sum\limits_{n = 0}^{N-1} \sum\limits_{m = 0}^{M-1}
w(n,m)\rho_{P+1} \cos (\omega _{P+1} n +
\upsilon _{P+1} m+\varphi_{P+1})   \\
 - \frac{2}{NM}\sum\limits_{n = 0}^{N-1} \sum\limits_{m = 0}^{M-1}
\bigg( \sum\limits_{i = 1}^{P} \rho_i^0 \cos (\omega _i^0 n +
\upsilon _i^0 m+\varphi_i^0) - \sum\limits_{i = 1}^{P} \rho_i \cos
(\omega _i n + \upsilon _i m+\varphi_i)  \bigg) \\
\bigg(\rho_{P+1} \cos (\omega _{P+1} n + \upsilon _{P+1}
m+\varphi_{P+1})  \bigg)
 = H_1(\theta _{P + 1} ) + H_2(\theta _{P + 1} ) + H_3(\theta _{P + 1} ) \label{lp1}
 \end{array}
\end{equation}
where, $\theta _P = ({\rho}_1,{\omega} _1 ,{\upsilon} _1
,{\varphi}_1,\ldots ,{\rho}_P ,{\omega} _P ,{\upsilon}
_P,{\varphi}_P ) \in \Theta _P $ and,
\begin{equation}
H_1(\theta _{P + 1} ) = {\cal L}_P ({\rho}_1,{\omega} _1
,{\upsilon} _1 ,{\varphi}_1,\ldots ,{\rho}_P ,{\omega} _P
,{\upsilon} _P,{\varphi}_P ) = {\cal L}_P (\theta _P ),
\end{equation}
\begin{equation}
H_2(\theta _{P + 1} ) = \frac{\rho_{P+1}^2}{2} -
\frac{2}{NM}\sum\limits_{n = 0}^{N-1} \sum\limits_{m = 0}^{M-1}
w(n,m)\rho_{P+1} \cos (\omega _{P+1} n + \upsilon _{P+1}
m+\varphi_{P+1}),
\end{equation}
\begin{eqnarray}
&&H_3(\theta _{P + 1} ) =\frac{1}{2NM}\sum\limits_{n = 0}^{N - 1}
\sum\limits_{m = 0}^{M - 1}\rho_{P+1}^2 \cos (2\omega
_{P+1} n + 2\upsilon _{P+1} m+2\varphi_{P+1}) \nonumber\\
 &&- \frac{2}{NM}\sum\limits_{n = 0}^{N-1} \sum\limits_{m = 0}^{M-1}
\bigg( \sum\limits_{i = 1}^{P} \rho_i^0 \cos (\omega _i^0 n +
\upsilon _i^0 m+\varphi_i^0) - \sum\limits_{i = 1}^{P} \rho_i \cos
(\omega _i n + \upsilon _i m+\varphi_i)  \bigg) \nonumber\\
&&\bigg(\rho_{P+1} \cos (\omega _{P+1} n + \upsilon _{P+1}
m+\varphi_{P+1})  \bigg). \label{h3}
\end{eqnarray}

Let $\hat {\theta }_P = (\hat{\rho}_1,\hat{\omega} _1
,\hat{\upsilon} _1 ,\hat{\varphi}_1,\ldots ,\hat{\rho}_P
,\hat{\omega} _P ,\hat{\upsilon} _P,\hat{\varphi}_P )$ be a vector
in $\Theta _P $ that minimizes $H_1(\theta _{P + 1} ) = {\cal L}_P
(\theta _P )$. From \cite{kundu} (or using Theorem 1 in the
previous section),
\begin{equation}
\hat {\theta }_P \to \theta _P^0 \ \ \ a.s. \ \ \ {\rm as} \ \ \
\Psi (N,M) \to \infty. \label{kkkk}
\end{equation}

The function $H_2$  is a function of ${\rho}_{P+1} ,{\omega}
_{P+1} ,{\upsilon} _{P+1},{\varphi}_{P+1}$ only. Evaluating the
partial derivatives of $H_2$ with respect to these variables, it
is easy to verify that the extremum points of $H_2$ are also the
extremum points of the periodogram of the realization of the noise
field. Moreover, let ${\rho}^e ,{\omega} ^e ,{\upsilon}
^e,{\varphi}^e$ denote an extremum point of $H_2$. Then at this
point
\begin{equation}
H_2({\rho}^e ,{\omega} ^e ,{\upsilon}
^e,{\varphi}^e)=-\frac{I_w(\omega^e,\upsilon^e)}{NM}.
\end{equation}

Hence, the minimal value of $H_2$ is obtained at the coordinates
${\rho}_{P+1} ,{\omega} _{P+1} ,{\upsilon} _{P+1},{\varphi}_{P+1}$
where the periodogram of $\{w(n,m)\}$ is maximal. Let
$\hat{\rho}_{P+1} ,\hat{\omega} _{P+1} ,\hat{\upsilon}
_{P+1},\hat{\varphi}_{P+1}$ denote the coordinates that minimize
$H_2 $. Then we have
\begin{equation}
(\hat {\omega }_{P + 1} ,\hat {\upsilon }_{P + 1} ) = \mathop {\arg
\min } \limits_{(\omega ,\upsilon ) \in {(0,2\pi)^2}} H_2
({\rho}_{P+1} ,{\omega} _{P+1} ,{\upsilon} _{P+1},{\varphi}_{P+1}) =
\mathop {\arg \max }\limits_{(\omega ,\upsilon ) \in {(0,2\pi)^2}}
I_w (\omega ,\upsilon ),
\end{equation}
and
\begin{equation}
\hat {\rho}_{P + 1}^2 = \frac{2}{NM}I_w (\hat{\omega}_{P+1}
,\hat{\upsilon}_{P+1} ).
\end{equation}

By Assumption 1, 2 and Theorem 1, \cite{zhang},   we have
\begin{equation}
\mathop{\rm \sup}\limits_
{\omega,\upsilon}I_w(\omega,\upsilon)=O(\log NM). \label{con15}
\end{equation}

Therefore,
\begin{equation}
 H_2 ( \hat{\rho}_{P+1} ,\hat{\omega}
_{P+1} ,\hat{\upsilon}
_{P+1},\hat{\varphi}_{P+1})=O\left(\frac{\log NM}{NM} \right) \ .
\label{h222}
\end{equation}

 Let $\hat {\theta }_{P + 1} \in \Theta _{P + 1} $ be the
vector composed of the elements of the vector $\hat {\theta }_P
\in \Theta _P $ and of  $\hat{\rho}_{P+1} ,\hat{\omega} _{P+1}
,\hat{\upsilon} _{P+1},\hat{\varphi}_{P+1} $, defined  above, \ie,
$$\hat {\theta }_{P + 1} = (\hat{\rho}_1,\hat{\omega} _1
,\hat{\upsilon} _1 ,\hat{\varphi}_1,\ldots ,\hat{\rho}_P
,\hat{\omega} _P ,\hat{\upsilon}
_P,\hat{\varphi}_P,\hat{\rho}_{P+1} ,\hat{\omega} _{P+1}
,\hat{\upsilon} _{P+1},\hat{\varphi}_{P+1} ).$$ We need to verify
that this vector minimizes ${\cal L}_{P + 1} (\theta _{P + 1} )$
on $\Theta _{P + 1} $ as $\Psi (N,M) \to \infty $ .

Recall that for   $\omega \in (0,2\pi)$ and $\varphi\in [0,2\pi)$
\begin{equation}
\sum\limits_{n = 0}^{N - 1} {\cos (\omega n+\varphi)}=\frac{\sin
\left([N -\frac{1}{2}]\omega +\varphi\right)+\sin \left(
\frac{\omega}{2}-\varphi \right)}{2\sin \left( \frac{\omega}{2}
\right)}=O(1). \label{l4x}
\end{equation}
Hence, as $N\to \infty $
\begin{equation}
\frac{1}{ \log N}\sum\limits_{n = 0}^{N - 1} {\cos (\omega
n+\varphi)}=o(1)\ , \label{l4b}
\end{equation}
and consequently
\begin{equation}
\frac{1}{N}\sum\limits_{n = 0}^{N - 1} {\cos (\omega
n+\varphi)}=o\bigg(\frac{\log N}{N}\bigg)\ . \label{l4}
\end{equation}

Next, we evaluate $H_3$.  Consider the first term in (\ref{h3}).
By (\ref{l4}) we have
\begin{equation}
\frac{1}{2NM}\sum\limits_{n = 0}^{N - 1} \sum\limits_{m = 0}^{M -
1}\rho_{P+1}^2 \cos (2\omega _{P+1} n + 2\upsilon _{P+1}
m+2\varphi_{P+1})=o\left(\frac{\log NM}{NM} \right),
\end{equation}
for {\it any} set of values $ {\rho}_{P+1} ,{\omega} _{P+1}
,{\upsilon} _{P+1},{\varphi}_{P+1} $ may assume.

 Consider the second term in (\ref{h3}).
By (\ref{l4}) and unless there exists some $i$, $1 \le i \le P$,
such that $(\omega _{P + 1} , \upsilon _{P + 1})=(\omega_i^0,
\upsilon_i^0)$, we have as $\Psi (N,M) \to \infty$,
\begin{equation} \frac{1}{NM}\sum\limits_{n = 0}^{N -
1} \sum\limits_{m = 0}^{M - 1} \sum\limits_{i = 1}^P\rho_{i}^0
\rho_{P+1}\cos (\omega _i^0 n + \upsilon _i^0 m+\varphi_i^0) \cos
(\omega _{P+1} n + \upsilon _{P+1} m+\varphi_{P+1})
=o\left(\frac{\log NM}{NM} \right), \label{new2}
\end{equation}
for {\it any} set of values $ {\rho}_{P+1} ,{\omega} _{P+1}
,{\upsilon} _{P+1},{\varphi}_{P+1} $ may assume.

Assume now that there exists some $i$, $1 \le i \le P$, such that
$(\omega _{P + 1} , \upsilon _{P + 1})=(\omega_i^0,
\upsilon_i^0)$. Since by assumption there are no two different
regressors  with identical spatial frequencies, it follows that
one of the estimated frequencies $(\omega_i, \upsilon_i)$ is due
to noise contribution.
Hence, by interchanging the roles of $(\omega _{P + 1} , \upsilon
_{P + 1})$ and $(\omega_i, \upsilon_i)$, and repeating the above
argument we conclude that this term has the same order as in
(\ref{new2}). Similarly, for the third term in (\ref{h3}): By
(\ref{l4}) and unless there exists some $i$, $1 \le i \le P$, such
that $(\omega _{P + 1} , \upsilon _{P + 1})=(\omega_i,
\upsilon_i)$, we have as $\Psi (N,M) \to \infty$,
\begin{equation}
 \frac{1}{NM}\sum\limits_{n = 0}^{N -
1} \sum\limits_{m = 0}^{M - 1} \sum\limits_{i = 1}^P\rho_{i}
\rho_{P+1}\cos (\omega _i n + \upsilon _i m+\varphi_i) \cos
(\omega _{P+1} n + \upsilon _{P+1} m+\varphi_{P+1})
=o\left(\frac{\log NM}{NM} \right).
\end{equation}
However such $i$ for which $(\omega _{P + 1} , \upsilon _{P +
1})=(\omega_i, \upsilon_i)$ cannot exist, as this amounts to
reducing the number of regressors from $P+1$ to $P$, as two of
them coincide. Hence, for {\it any} ${\theta }_{P + 1} \in {\Theta
}_{P + 1}$ as $\Psi (N,M) \to \infty$
\begin{equation}
H_3 ( {\theta }_{P + 1} ) = o\left(\frac{\log NM}{NM} \right).
\label{t25}
\end{equation}
On the other hand, the strong consistency (\ref{kkkk}) of the LSE
under the correct model order assumption implies that as $\Psi (N,M)
\to \infty$ the minimal value of ${\cal L}_{P} ( {\theta }_{P} )=
\sigma^2\sum_{(r,s) \in D}a^2(r,s)$ a.s., while from (\ref{h222}) we
have for the minimal value of $H_2$ that $H_2 ( \theta_{P + 1}
)=O\left(\frac{\log NM}{NM} \right)$. Hence, the value of $H_3 (
{\theta }_{P + 1} )$ at {\it any} point in $\Theta_{p+1}$ is
negligible even relative to the values ${\cal L}_{P} ( {\theta }_{P}
)$ and $ H_2 ( \theta_{P + 1} )$ assume at their respective {\it
minimum} points. Therefore, evaluating (\ref{lp1}) as $\Psi (N,M)
\to \infty$ we have
\begin{eqnarray}
  {\cal L}_{P + 1} ( {\theta }_{P + 1})& =&
 {\cal L}_{P} ( {\theta }_{P} )+ H_2 ( {\rho}_{P+1} ,{\omega} _{P+1}
,{\upsilon} _{P+1},{\varphi}_{P+1} )+ H_3 ( {\theta }_{P +
1} )\nonumber \\
&=& {\cal L}_{P} ( {\theta }_{P} )+ H_2 ( {\rho}_{P+1} ,{\omega}
_{P+1} ,{\upsilon} _{P+1},{\varphi}_{P+1} ) + o\left(\frac{\log
NM}{NM} \right)
 . \label{lp2}
\end{eqnarray}
Since $ {\cal L}_{P} ( {\theta }_{P} )$ is a function of the
parameter vector ${\theta }_{P}$ and is independent of
${\rho}_{P+1} ,{\omega} _{P+1} ,{\upsilon}
_{P+1},{\varphi}_{P+1}$, while $H_2$ is a function of
${\rho}_{P+1} ,{\omega} _{P+1} ,{\upsilon} _{P+1},{\varphi}_{P+1}$
and is independent of ${\theta }_{P}$, the problem of minimizing $
{\cal L}_{P + 1} ( {\theta }_{P + 1}) $ becomes  {\it separable}
as ${\Psi (N,M) \to \infty } $. Thus minimizing (\ref{lp2}) is
equivalent to separately minimizing $ {\cal L}_{P} ( {\theta
}_{P}) $ and $ H_2 ( {\rho}_{P+1} ,{\omega} _{P+1} ,{\upsilon}
_{P+1},{\varphi}_{P+1})$ as ${\Psi (N,M) \to \infty } $. Using the
foregoing conclusions, the theorem follows.
\end{proof}

%
%

\subsection{Discussion}

In the above theorems, we have considered the problem of { least
squares} estimation of the parameters of 2-D sinusoidal signals
observed in the presence of an additive colored noise field, when
the assumed number of sinusoids is incorrect. In the case where the
number of sinusoidal signals is under-estimated we have established
the almost sure convergence of the least squares estimates to the
parameters of the dominant sinusoids. This result can be intuitively
explained using the basic principles of least squares estimation:
Since the least squares estimate is the set of model parameters that
minimizes the $\ell_2$ norm of the error between the observations
and the assumed model, it follows that in the case where the model
order is under-estimated the minimum error norm is achieved when the
$k$ most dominant sinusoids are correctly estimated. Similarly, in
the case where the number of sinusoidal signals is over-estimated,
the estimated parameter vector obtained by the least squares
estimator contains a $4P$-dimensional sub-vector that converges
almost surely to the correct parameters of the sinusoids, while the
remaining $k-P$ components assumed to exist, are assigned to the
$k-P$ most dominant spectral peaks of the noise power to further
minimize the norm of the estimation error.
%
%

\section{Strong Consistency of a Family of Model Order Selection
Rules} In this section we employ the results derived in the previous
section in order to establish the strong consistency of a new model
order selection rule.

 It is assumed that there are
$Q$ competing models, where $Q$ is finite, $Q>P$, and that each
competing model $k \in Z_Q =\{0,1,2,\dots,Q-1\}$ is equiprobable.
Following the MDL-MAP template, define the statistic

\begin{equation}
\chi_{\xi}(k)=NM\log {\cal L}_k(\hat\theta _k )+ \xi k \log NM,
\end{equation}
where $\xi$ is some finite constant to be specified later, and
${\cal L}_k(\hat \theta _k )$ is the minimal value of the error
variance of the least squares estimator.

The number of 2-D sinusoids is estimated by minimizing
$\chi_{\xi}(k)$ over $k \in Z_Q$, \ie,
\begin{equation}
\hat P=\mathop{\rm arg \min}\limits_ {k \in Z_Q}\bigg
\{\chi_{\xi}(k)\bigg\}. \label{con1}
\end{equation}

Let

\begin{equation}
{\cal A}:=\frac{\sum_{(r,s) \in D}\sum_{(q,t) \in D}
|a(r,s)a(q,t)|}{\sum_{(r,s) \in D} a^2(r,s)}.
\end{equation}


 The objective of the next theorem is to prove the asymptotic
consistency of the model order selection procedure in
(\ref{con1}).

\begin{theorem}
Let  Assumptions 1-4 be satisfied. Let $\hat P$ be given by
(\ref{con1}) with $\xi>14{\cal A}$. Then as $\Psi(N,M)\rightarrow
\infty$
\begin{equation}
\hat P\rightarrow P \ \ \mbox{a.s.}
\end{equation}
\end{theorem}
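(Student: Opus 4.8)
The plan is to establish strong consistency by showing separately that both under-estimation ($k<P$) and over-estimation ($k>P$) are rejected almost surely for all sufficiently large $\Psi(N,M)$, so that the minimizer $\hat P$ of $\chi_\xi(\cdot)$ equals $P$ eventually. Throughout, write $\sigma_w^2=\sigma^2\sum_{(r,s)\in D}a^2(r,s)$ for the noise variance, and recall from Theorem 1 (equivalently \cite{kundu}) that $\mathcal{L}_P(\hat\theta_P)\to\sigma_w^2>0$ a.s. Every comparison reduces to the sign of $\chi_\xi(k)-\chi_\xi(P)=NM\big[\log\mathcal{L}_k(\hat\theta_k)-\log\mathcal{L}_P(\hat\theta_P)\big]+\xi(k-P)\log NM$.

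First I would pin down the almost-sure behaviour of the data term $\mathcal{L}_k(\hat\theta_k)$ in the two regimes. For $k<P$, Theorem 1 gives $\hat\theta_k\to\theta_k^0$ a.s. (the $k$ dominant sinusoids), so by the cosine-orthogonality and law-of-large-numbers computations already used in (\ref{l4})--(\ref{new2}) the residual loss satisfies $\mathcal{L}_k(\hat\theta_k)\to\sigma_w^2+\tfrac12\sum_{i=k+1}^P(\rho_i^0)^2$ a.s., which strictly exceeds $\sigma_w^2$ since each $\rho_i^0\neq0$. For $k>P$, Theorem 2 applied $k-P$ times yields $\mathcal{L}_k(\hat\theta_k)=\mathcal{L}_P(\hat\theta_P)-\tfrac1{NM}\sum_{\ell=1}^{k-P}I_w^{(\ell)}+o\!\big(\tfrac{\log NM}{NM}\big)$ a.s., where $I_w^{(1)}\ge I_w^{(2)}\ge\cdots$ are the ordered peak values of the periodogram, each bounded above by $\sup_{\omega,\upsilon}I_w(\omega,\upsilon)$.

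The under-estimation case is then immediate and holds for every $\xi>0$: since $\log\mathcal{L}_k(\hat\theta_k)-\log\mathcal{L}_P(\hat\theta_P)\to\log\big(1+\tfrac{1}{2\sigma_w^2}\sum_{i=k+1}^P(\rho_i^0)^2\big)>0$ a.s., the leading term grows linearly in $NM$ while the penalty $\xi(k-P)\log NM$ is only logarithmic and negative, so $\chi_\xi(k)-\chi_\xi(P)\to+\infty$ a.s. and no $k<P$ is chosen eventually. For over-estimation I would Taylor-expand the logarithm, which is legitimate because $\mathcal{L}_P(\hat\theta_P)$ is a.s. bounded away from $0$. Writing $j=k-P$,
\[
NM\big[\log\mathcal{L}_k(\hat\theta_k)-\log\mathcal{L}_P(\hat\theta_P)\big]
=\frac{NM\big(\mathcal{L}_k(\hat\theta_k)-\mathcal{L}_P(\hat\theta_P)\big)}{\mathcal{L}_P(\hat\theta_P)}(1+o(1))
=-\frac{\sum_{\ell=1}^{j}I_w^{(\ell)}}{\sigma_w^2}(1+o(1))+o(\log NM),
\]
so that, using $\sum_{\ell=1}^{j}I_w^{(\ell)}\le j\sup_{\omega,\upsilon}I_w(\omega,\upsilon)$,
\[
\chi_\xi(k)-\chi_\xi(P)\ge j\log NM\,\Big(\xi-\frac{\sup_{\omega,\upsilon}I_w(\omega,\upsilon)}{\sigma_w^2\log NM}\Big)(1+o(1))+o(\log NM).
\]

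The crux, and the step I expect to be the main obstacle, is a sharp almost-sure bound on the periodogram maximum carrying the correct constant. Refining (\ref{con15}) via the argument of Zhang and Mandrekar \cite{zhang}---which is exactly where Assumption 1's moment condition $E[|u|^\alpha]<\infty$, $\alpha>3$, is needed---I would establish $\limsup_{\Psi(N,M)\to\infty}\tfrac{1}{\log NM}\sup_{\omega,\upsilon}I_w(\omega,\upsilon)\le 14\,\sigma^2\big(\sum_{(r,s)\in D}|a(r,s)|\big)^2=14\,\mathcal{A}\,\sigma_w^2$ a.s., the last identity following from $\sum_{(r,s)}\sum_{(q,t)}|a(r,s)a(q,t)|=\big(\sum_{(r,s)}|a(r,s)|\big)^2$. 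Substituting this bound, the parenthesis above is asymptotically $\xi-14\mathcal{A}$, which is strictly positive precisely under the hypothesis $\xi>14\mathcal{A}$; hence $\chi_\xi(k)-\chi_\xi(P)\to+\infty$ a.s. for every $k>P$ as well. Combining the two regimes shows that $\hat P=P$ for all sufficiently large $\Psi(N,M)$ almost surely, i.e. $\hat P\to P$ a.s.
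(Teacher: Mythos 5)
Your proposal is correct and takes essentially the same route as the paper's proof: the same almost-sure limits ${\cal L}_k(\hat\theta_k)\to\sigma_w^2+\tfrac12\sum_{i=k+1}^P(\rho_i^0)^2$ for $k<P$, the same extension of Theorem 2 expressing ${\cal L}_{P+l}(\hat\theta_{P+l})$ through the ordered periodogram peaks, and the same combination of the Zhang--Mandrekar bound $\limsup \sup_{\omega,\upsilon}I_w/(\sup_{\omega,\upsilon}f_w\log NM)\le 14$ with the triangle-inequality bound $\sup_{\omega,\upsilon}f_w\le\sigma^2\big(\sum|a(r,s)|\big)^2$ and ${\cal L}_P(\hat\theta_P)\to\sigma_w^2$, yielding exactly the threshold $\xi>14{\cal A}$. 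The only cosmetic differences are that you compare each $k<P$ directly against $P$ where the paper proves $\chi_\xi$ is monotonically decreasing on $\{0,\dots,P\}$ via adjacent differences, and that you fold the two spectral bounds into a single statement about $\sup_{\omega,\upsilon}I_w/\log NM$.
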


\begin{proof}

For $k \leq P$,
\begin{eqnarray}
&& \chi_{\xi}(k-1)-\chi_{\xi}(k) \nonumber \\
&&=NM \log {\cal L}_{k-1}(\hat\theta _{k-1} )+ \xi
(k-1)\log NM -NM \log {\cal L}_k(\hat\theta _k )- \xi k\log NM \nonumber \\
&&=NM \log\bigg(\frac{{\cal L}_{k-1}(\hat\theta _{k-1} )}{{\cal
L}_k(\hat\theta _k )}\bigg)
 - \xi \log NM.
\label{con9}
\end{eqnarray}

From Theorem 1  as $\Psi(N,M)\rightarrow \infty$
\begin{equation}
\hat\theta _{k} \rightarrow \theta _{k}^0 \ \ \mbox{a.s.}, \ \
\label{x1}
\end{equation}
and
\begin{equation}
\hat\theta _{k-1} \rightarrow \theta _{k-1}^0 \ \ \mbox{a.s.} \ \
\end{equation}

From the definition of ${\cal L}_{k}(\hat\theta _{k} )$, and (\ref{x1})
\begin{equation}
\begin{array}{l}
 {\cal L}_k(\hat\theta _k ) = \frac{1}{NM}\sum\limits_{n = 0}^{N-1} \sum\limits_{m = 0}^{M-1}
\bigg( y(n,m) - \sum\limits_{i = 1}^k \hat\rho_i \cos (\hat\omega
_i n +
\hat\upsilon _i m+\hat\varphi_i)  \bigg)^2 \\
 = \frac{1}{NM}\sum\limits_{n = 0}^{N-1} \sum\limits_{m = 0}^{M-1}
\bigg( \sum\limits_{i = 1}^P \rho_i^0 \cos (\omega _i^0 n +
\upsilon _i^0 m+\varphi_i^0) + w(n,m) -\sum\limits_{i = 1}^k
\hat\rho_i \cos (\hat\omega _i n +
\hat\upsilon _i m+\hat\varphi_i) \bigg)^2  \\
 \mathop{\rm \longrightarrow}\limits_ {\Psi(N,M)\rightarrow
\infty} \frac{1}{NM}\sum\limits_{n = 0}^{N-1} \sum\limits_{m =
0}^{M-1} \bigg( \sum\limits_{i = k+1}^P \rho_i^0 \cos (\omega _i^0 n
+ \upsilon _i^0 m+\varphi_i^0) + w(n,m) \bigg)^2.
 \end{array}
\end{equation}

 From Lemma 3 in Appendix C we have
that as $\Psi (N,M) \to \infty$
\begin{equation}
\mathop {\sup }\limits_{\omega,\upsilon } \left |
\frac{1}{NM}\sum\limits_{n = 0}^{N-1} \sum\limits_{m=0}^{M-1} w(n,m)
\cos (\omega n + \upsilon m)
  \right | \to 0 \ \ \mbox{a.s.}
\label{lem}
\end{equation}

Hence, from the  Assumption 3, (\ref{l4x}), (\ref{lem}) and the
Strong Law of Large Numbers, we conclude that as
$\Psi(N,M)\rightarrow \infty$
\begin{equation}
{\cal L}_{k}(\hat\theta _{k} )\rightarrow \sigma^2\sum_{(r,s) \in D}
a^2(r,s)+ \sum_{i=k+1}^{P}\frac{(\rho_i^0)^2}{2}\ \ \mbox{a.s.}
\end{equation}
and similarly
\begin{equation}
{\cal L}_{k-1}(\hat\theta _{k-1} )\rightarrow \sigma^2\sum_{(r,s)
\in D} a^2(r,s)+ \sum_{i=k}^{P}\frac{(\rho_i^0)^2}{2}\ \ \mbox{a.s.}
\label{con8}
\end{equation}

Since $\frac{\log NM}{NM}$ tends to zero, as $\Psi(N,M)\rightarrow
\infty$, then as $\Psi(N,M)\rightarrow \infty$
\begin{eqnarray}
(NM)^{-1}(\chi_{\xi}(k-1)-\chi_{\xi}(k)) \rightarrow \log
\bigg(1+\frac{(\rho_k^0)^2}{2\sigma^2\sum_{(r,s) \in D}
 a^2(r,s)+
\sum_{i=k+1}^{P}(\rho_i^0)^2}\bigg)  \mbox{a.s.}
\end{eqnarray}

Since $ \log \bigg(1+\frac{(\rho_k^0)^2}{2\sigma^2\sum_{(r,s) \in D}
 a^2(r,s)+
\sum_{i=k+1}^{P}(\rho_i^0)^2}\bigg)$ is strictly positive, then
$\chi_{\xi}(k-1)> \chi_{\xi}(k)$. Hence, for $k \leq P$, the
function $\chi_{\xi}(k)$ is monotonically decreasing with $k$.

We next consider  the case where $k=P+l$ for any integer $l\geq1$.

Based on \cite{zhang}, Theorem 1 and Assumptions 1, 2 we have that
\begin{equation}
\mathop{\rm \lim \sup}\limits_ {\Psi(N,M)\rightarrow
\infty}\frac{\mathop{\rm \sup}\limits_
{\omega,\upsilon}I_{w}(\omega,\upsilon)}{ {\mathop{\rm
\sup}\limits_ {\omega,\upsilon}f_w(\omega,\upsilon)}\log (NM)}\leq
14 \  \ \ \mbox{a.s.} \label{con15a}
\end{equation}

Based on an extension of Theorem 2 we have that a.s. as
$\Psi(N,M)\rightarrow \infty$
\begin{equation}
{\cal L}_{P+l}(\hat\theta _{P+l} )={\cal L}_P(\hat\theta _P )-
\frac{U_l}{NM}+o\bigg(\frac{\log NM}{NM}\bigg), \label{con16}
\end{equation}
where
\begin{equation}
U_l=\sum_{i=1}^lI_w(\omega_i,\upsilon_i),
\end{equation}
is the sum of the $l$ largest elements of the periodogram of the
noise field $\{w(s,t)\}$. Clearly
\begin{equation}
U_l\leq l\mathop{\rm \sup}\limits_
{\omega,\upsilon}I_u(\omega,\upsilon).
\end{equation}


Similarly to (\ref{con9}), a.s. as $\Psi(N,M)\rightarrow \infty$,
\begin{eqnarray}
&&\chi_{\xi}(P+l)-\chi_{\xi}(P) \nonumber \\
&&=NM \log {\cal L}_{P+l}(\hat\theta _{P+l} )+ \xi (P+l)\log NM -
NM \log {\cal L}_P(\hat\theta _P )- \xi P\log NM \nonumber \\
&&=\xi l\log NM+NM\log \left(1-\frac{U_l}{NM {\cal L}_P(\hat\theta
_P )}+o\bigg(\frac{\log NM}{NM}\bigg)\right) \nonumber \\
&&=\xi l\log NM-\bigg(\frac{U_l}{ {\cal L}_P(\hat\theta _P )} +o(\log NM)\bigg)(1+o(1))\nonumber \\
&&=\log NM\bigg(\xi l - \frac{U_l}{{\cal L}_P(\hat\theta _P )\log
NM}+o(1)\bigg)
\geq\log NM\bigg(\xi l - \frac{l\mathop{\rm \sup}\limits_ {\omega,\upsilon}I_w(\omega,\upsilon)}{{\cal L}_P(\hat\theta _P )\log NM}+o(1)\bigg) \nonumber \\
&&= l\log NM\bigg(\xi  - \frac{\mathop{\rm \sup}\limits_
{\omega,\upsilon}I_w(\omega,\upsilon)}{\mathop{\rm \sup}\limits_
{\omega,\upsilon}f_w(\omega,\upsilon)\log NM}\frac{\mathop{\rm
\sup}\limits_ {\omega,\upsilon}f_w(\omega,\upsilon)}{{\cal
L}_P(\hat\theta _P )}+o(1)\bigg), \label{con14}
\end{eqnarray}
where the second equality is obtained by substituting ${\cal
L}_{P+l}(\hat\theta _{P+l} )$  using the equality (\ref{con16}).
The third equality is due to the property that for $x\rightarrow
0$, $\log(1+x)=x(1+o(1))$, where the observation that the term
$\frac{U_l}{NM{\cal L}_P(\hat\theta _P )}$ tends to zero a.s. as
$\Psi(N,M)\rightarrow \infty$ is due to (\ref{con15a}).

From \cite{kundu} (or using Theorem 1 in the previous section),
\begin{equation}
\hat {\theta }_P \to \theta _P^0 \ \ \ a.s. \ \ \ {\rm as} \ \ \
\Psi (N,M) \to \infty. \label{kkkk1}
\end{equation}

Hence, the strong consistency (\ref{kkkk1}) of the LSE under the
correct model order assumption implies that as $\Psi (N,M) \to
\infty$
\begin{equation}
{\cal L}_{P}(\hat\theta _{P} )\rightarrow \sigma^2\sum_{(r,s) \in D}
a^2(r,s)\ \ \mbox{a.s.} \label{con16a}
\end{equation}
On the other hand using the triangle inequality
\begin{equation}
\mathop{\rm \sup}\limits_{\omega,\upsilon}f_w(\omega,\upsilon)\leq
\sigma^2\sum_{(r,s) \in D}\sum_{(q,t)\in D} |a(r,s)a(q,t)|.
\label{con17}
\end{equation}

 Substituting (\ref{con15a}),(\ref{con16a}) and (\ref{con17}) into (\ref{con14})
we conclude that
\begin{equation}
\chi_{\xi}(P+l)-\chi_{\xi}(P)>0
\end{equation}
for any integer $l\geq1$. Therefore, a.s. as $\Psi(N,M)\rightarrow
\infty$, the function $\chi_{\xi}(k)$ has a {\bf global minimum} for
$k=P$.
\end{proof}

\section{Special Case}

Introducing some additional restrictions on the structure of the
noise field, we can establish a tighter (in terms of $\xi$) model
order selection rule. We thus modify our earlier Assumption 1, 2
regarding the noise field as follows:

\textbf{ Assumption 1'} The noise field $\{w(n,m)\}$
 is an infinite order \emph{quarter-plane} MA field, \ie,
\begin{equation}
w(n,m) = \sum_{r,s=0}^{\infty}a(r,s) u(n-r,m-s)  \label{e3a}
\end{equation}
where the field $\{u(n,m)\}$ is an i.i.d. real valued zero-mean
random field with finite variance $\sigma^2$, such that ${
E}[u(n,m)^2 \log|u(n,m)|]< \infty $.

\textbf{ Assumption 2'} The sequence ${a(i,j)}$ is a deterministic
sequence which satisfied the condition
\begin{equation}
\sum_{r,s=0}^{\infty}(r+s)|a(r,s)| < \infty.
\end{equation}

In this case, based on \cite{He}, Theorem 3.2 and Assumption 1', 2'
we have that
\begin{equation}
\mathop{\rm \lim \sup}\limits_ {\Psi(N,M)\rightarrow
\infty}\frac{\mathop{\rm \sup}\limits_
{\omega,\upsilon}I_{w}(\omega,\upsilon)}{ {\mathop{\rm \sup}\limits_
{\omega,\upsilon}f_w(\omega,\upsilon)}\log (NM)}\leq 8 \  \ \
\mbox{a.s.} \label{con15}
\end{equation}

The results of Theorem 1 and 2 are not affected by this assumption.
The only change is in Theorem 3. Therefore we can formulate the next
theorem:

\begin{theorem}
Let  Assumptions 1', 2', 3 and 4 be satisfied. Let $\hat P$ be given
by (\ref{con1}) with $\xi>8{\cal A}$. Then as $\Psi(N,M)\rightarrow
\infty$
\begin{equation}
\hat P\rightarrow P \ \ \mbox{a.s.}
\end{equation}
\end{theorem}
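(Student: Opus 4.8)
The plan is to reproduce the argument of Theorem 3 almost verbatim, changing only the numerical constant that enters through the almost-sure bound on the supremum of the periodogram of the noise field. First I would observe that the analysis of the regime $k \le P$ in Theorem 3 never invokes the periodogram constant: it rests solely on the strong consistency of the under-determined LSE (Theorem 1), the Strong Law of Large Numbers, and the strict positivity of the limiting log-ratio. The Remark following Theorem 1 guarantees that Theorem 1 holds whenever $\{u(n,m)\}$ is i.i.d.\ zero-mean with finite variance and $\{a(r,s)\}$ is square summable, and Assumptions 1', 2' imply both (indeed $\sum_{r,s}(r+s)|a(r,s)|<\infty$ forces absolute, hence square, summability). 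Thus this half transfers unchanged and $\chi_\xi(k)$ remains strictly decreasing for $k\le P$.

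For the regime $k=P+l$ with $l\ge1$, I would reuse the expansion (\ref{con16}), ${\cal L}_{P+l}(\hat\theta_{P+l})={\cal L}_P(\hat\theta_P)-U_l/(NM)+o(\log NM/NM)$ with $U_l\le l\sup_{\omega,\upsilon}I_w(\omega,\upsilon)$, which follows from the extension of Theorem 2 and needs only $\sup I_w=O(\log NM)$ --- a bound supplied a fortiori by (\ref{con15}) under Assumptions 1', 2'. Substituting into $\chi_\xi(P+l)-\chi_\xi(P)$ and factoring exactly as in (\ref{con14}) yields
$$
\chi_\xi(P+l)-\chi_\xi(P)\ge l\log NM\left(\xi-\frac{\sup_{\omega,\upsilon}I_w(\omega,\upsilon)}{\sup_{\omega,\upsilon}f_w(\omega,\upsilon)\log NM}\cdot\frac{\sup_{\omega,\upsilon}f_w(\omega,\upsilon)}{{\cal L}_P(\hat\theta_P)}+o(1)\right).
$$
The one substantive change is to bound the first factor in the product by $8$ through (\ref{con15}) (He's Theorem 3.2 under the quarter-plane Assumptions 1', 2') in place of the value $14$ used in (\ref{con15a}). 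Combining this with (\ref{con16a}), ${\cal L}_P(\hat\theta_P)\to\sigma^2\sum_{(r,s)\in D}a^2(r,s)$ a.s., and the triangle-inequality bound (\ref{con17}), $\sup_{\omega,\upsilon}f_w\le\sigma^2\sum_{(r,s)}\sum_{(q,t)}|a(r,s)a(q,t)|$, shows the second factor is at most ${\cal A}$. Hence the bracket converges to $\xi-8{\cal A}>0$, so $\chi_\xi(P+l)>\chi_\xi(P)$ a.s.\ for every $l\ge1$ and the global minimum of $\chi_\xi$ is attained at $k=P$.

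I do not expect a genuine obstacle here, since the hard work is already packaged in the cited periodogram bound (\ref{con15}). The only points requiring care are verifying that the quarter-plane Assumptions 1', 2' are exactly the hypotheses of He's Theorem 3.2, and that the weakened moment condition $E[u^2\log|u|]<\infty$ (replacing $E[|u|^\alpha]<\infty$ for some $\alpha>3$) still supports Theorems 1 and 2 --- both settled by the Remark after Theorem 1 and by the persistence of $\sup I_w=O(\log NM)$.
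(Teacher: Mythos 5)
Your proposal is correct and takes essentially the same route as the paper, whose entire proof of this theorem is the observation that the argument of Theorem 3 goes through verbatim once the constant-14 periodogram bound from Zhang--Mandrekar is replaced by He's constant-8 bound valid under the quarter-plane Assumptions 1', 2'. Your extra care in checking that Theorems 1 and 2 survive the weakened moment condition (via the Remark after Theorem 1 and the persistence of $\sup_{\omega,\upsilon} I_w = O(\log NM)$) is precisely what the paper compresses into the unproved sentence ``The results of Theorem 1 and 2 are not affected by this assumption.''
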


The proof of this theorem is identical to the proof of Theorem 3,
where instead of (\ref{con15a}) we employ the inequality in
(\ref{con15}).

%
%
\section{Conclusions}
\label{Concl} We have considered the problem of jointly estimating
the number as well as the  parameters of two-dimensional sinusoidal
signals, observed in the presence of an additive colored noise
field. We have established the strong consistency of the LSE when
the number of sinusoidal signals is under-estimated, or
over-estimated. Based on these results, we have proved the strong
consistency of a new model order selection rule for the number of
sinusoidal components.

\vspace{0.3in}

%
%
\appendix{\large {\bf Appendix A}}
\setcounter{lemma}{0}
\begin{lemma}
\begin{equation}
\mathop {\liminf}\limits_{\Psi (N,M) \to \infty } \mathop {\inf
}\limits_{\theta _k \in \Delta _\delta } \left( {{\cal L}_k(\theta
_k ) - {\cal L}_k(\theta _k^0 )} \right) > 0 \ \ a.s. \label{l0lm}
\end{equation}
\end{lemma}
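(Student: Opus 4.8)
The plan is to separate ${\cal L}_k(\theta_k)-{\cal L}_k(\theta_k^0)$ into a deterministic signal part and a stochastic noise part, to kill the noise part uniformly by means of the bound already available for the noise periodogram, and then to show that the signal part stays bounded below by a positive constant on $\Delta_\delta$. First I would write $y(n,m)=\mu(n,m)+w(n,m)$ with $\mu(n,m)=\sum_{i=1}^P\rho_i^0\cos(\omega_i^0 n+\upsilon_i^0 m+\varphi_i^0)$, let $s_k(\theta_k;n,m)=\sum_{i=1}^k\rho_i\cos(\omega_i n+\upsilon_i m+\varphi_i)$ be the $k$-regressor model (abbreviated $s_k$), put $s_k^0=s_k(\theta_k^0;\cdot)$, and collect the $P-k$ weakest true sinusoids into $r$, so that $\mu=s_k^0+r$. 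Writing $\langle f,h\rangle=\sum_{n,m}f(n,m)h(n,m)$ and $\|f\|^2=\langle f,f\rangle$, a routine expansion of the squares gives
\[
{\cal L}_k(\theta_k)-{\cal L}_k(\theta_k^0)=\frac{1}{NM}\|\mu-s_k\|^2-\frac{1}{NM}\|r\|^2+\frac{2}{NM}\langle s_k^0-s_k,\,w\rangle=:A_{N,M}(\theta_k)+B_{N,M}(\theta_k),
\]
where $B_{N,M}$ is the last (noise) term.

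I would dispose of $B_{N,M}$ first. It is a bounded linear combination of at most $2k$ correlations of the form $\frac{1}{NM}\sum_{n,m}w(n,m)\cos(\omega n+\upsilon m+\varphi)$; since the amplitudes stay bounded on $\Theta_k$, and by (\ref{lem}) (and its sine analogue) $\sup_{\omega,\upsilon}|\frac{1}{NM}\sum_{n,m}w(n,m)\cos(\omega n+\upsilon m)|\to0$ a.s., each such correlation is $o(1)$ uniformly in $(\omega,\upsilon,\varphi)$. Hence $\sup_{\theta_k\in\Theta_k}|B_{N,M}(\theta_k)|\to0$ a.s., and the lemma reduces to showing $\liminf_{\Psi(N,M)\to\infty}\inf_{\theta_k\in\Delta_\delta}A_{N,M}(\theta_k)>0$.

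Next I would identify the pointwise limit of $A_{N,M}$. By (\ref{l4x}) one has $\frac{1}{NM}\sum_{n,m}\cos(\alpha n+\beta m+\gamma)\to0$ whenever $(\alpha,\beta)\ne(0,0)$, so sinusoids at distinct spatial frequencies become asymptotically orthogonal and $\frac{1}{NM}\|r\|^2\to\frac{1}{2}\sum_{i=k+1}^P(\rho_i^0)^2$. Encoding each component $\rho\cos(\omega n+\upsilon m+\varphi)$ by the phasor $\rho e^{\sqrt{-1}\varphi}$ carried at $(\omega,\upsilon)$, and writing $C^0,C$ for the phasors of $\mu$ and $s_k$, the limit becomes
\[
g(\theta_k):=\frac{1}{2}\sum_{f}\big|C^0(f)-C(f)\big|^2-\frac{1}{2}\sum_{i=k+1}^P(\rho_i^0)^2,
\]
the sum running over the frequencies carried by $\mu$ or $s_k$. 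Because the frequencies inside $\Theta_k$ are pairwise distinct, placing the $k$ model frequencies exactly on $k$ of the true ones and matching the corresponding phasors annihilates those terms, and the residual $\frac{1}{2}\sum_{\mathrm{unmatched}}(\rho_i^0)^2$ is minimised by leaving the $P-k$ weakest components unmatched. Thus $g\ge0$, with $g(\theta_k)=0$ precisely when $s_k$ reproduces the $k$ dominant sinusoids, and Assumption 4 ($\rho_k^0>\rho_{k+1}^0$) makes that dominant set unique, so $\theta_k^0$ is the only minimiser and $g>0$ on $\Delta_\delta$.

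The step I expect to be the main obstacle is upgrading this pointwise picture to the uniform $\liminf$--$\inf$ statement, because $A_{N,M}\to g$ is not uniform: the limit drops onto the lower-dimensional frequency-coincidence sets, so $g$ is merely lower semicontinuous rather than continuous. Lower semicontinuity does, however, suffice to guarantee that $\inf_{\theta_k\in\Delta_\delta}g$ is attained and equals some $c_\delta>0$, since $g$ vanishes only at its unique minimiser $\theta_k^0\notin\Delta_\delta$. To pass from $g$ back to $A_{N,M}$ I would argue by contradiction: if $\liminf\inf_{\Delta_\delta}A_{N,M}\le0$, extract a near-infimising sequence $\theta_k^{(j)}\in\Delta_\delta$ along a subsequence of indices, use compactness of $\Delta_\delta$ to assume $\theta_k^{(j)}\to\theta_k^*\in\Delta_\delta$, and derive $0\ge\liminf_j A_{N_j,M_j}(\theta_k^{(j)})\ge c_\delta>0$. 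The quantitative tool securing the last inequality is the Dirichlet-kernel estimate $|\frac{1}{N}\sum_{n=0}^{N-1}e^{\sqrt{-1}\alpha n}|\le\pi/(N|\alpha|)$: model frequencies of $\theta_k^{(j)}$ that keep a fixed distance from every true frequency become asymptotically orthogonal to $\mu$ and can only add energy, so the residual $\frac{1}{N_jM_j}\|\mu-s_k\|^2$ exceeds the tail energy $\frac{1}{2}\sum_{i>k}(\rho_i^0)^2$ by a margin controlled by the $\delta$-separation defining $\Delta_\delta$, with Assumption 4 excluding a relabelling onto a different dominant $k$-subset. Combined with $\sup_{\Theta_k}|B_{N,M}|\to0$ a.s., this establishes (\ref{l0lm}).
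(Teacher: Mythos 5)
Your skeleton is the paper's own: the split of ${\cal L}_k(\theta_k)-{\cal L}_k(\theta_k^0)$ into a deterministic part $A_{N,M}$ and a noise cross term $B_{N,M}$ is algebraically identical to the paper's decomposition $I_1+I_2+I_3$ (your $A_{N,M}=I_1+I_2$, your $B_{N,M}=I_3$), and killing $B_{N,M}$ uniformly via the supremum lemma of Appendix C is exactly the paper's step. The two arguments part ways only in how the deterministic term is bounded below on $\Delta_\delta$: the paper partitions $\Delta_\delta$ into ${\cal R}_{q\delta}$, $\Phi_{q\delta}$ and $W_{q\delta}\cup V_{q\delta}$ (the last split into $\Lambda$ and $\Lambda^c$ according to whether the deviating frequency lands on a weaker true frequency), evaluates the limit on each piece, and takes the minimum of explicit constants such as $\delta^2/2$, $(\rho_q^0)^2(1-\cos\delta)$ and $\bigl((\rho_q^0)^2-(\rho_p^0)^2\bigr)/2$; you instead introduce the limiting phasor functional $g$ and argue by compactness and contradiction along near-infimising sequences. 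Both versions rest on the same two ingredients, asymptotic orthogonality of sinusoids at distinct frequencies and Assumption 4, so this is essentially the same approach with a different packaging of the uniformity step.

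It is in that packaging that you have a genuine gap, at exactly the point you flagged as the main obstacle. The Dirichlet-kernel estimate $\bigl|\frac{1}{N}\sum_{n=0}^{N-1}e^{j\alpha n}\bigr|\le \pi/(N|\alpha|)$ cannot ``secure the last inequality'': it is informative only when $|\alpha|\gg 1/N$, whereas the dangerous near-infimising sequences are precisely those whose model frequencies approach a true frequency, or approach each other, at rate $O(1/N_j)$; there the bound gives $O(1)$, the normalized cross terms converge along subsequences to partial-correlation values $\kappa\in[-1,1]$ rather than to zero, and your clause that such components ``can only add energy'' is the entire content still to be proved. Discharging it requires, for instance: extracting subsequences along which all pairwise partial correlations converge; using positive semidefiniteness of the limiting Gram form to show each cluster of model frequencies contributes nonnegatively to $\frac{1}{N_jM_j}\|\mu-s_k\|^2$; and a counting argument, namely that if a model frequency clusters with another one or drifts away from all true frequencies then some dominant sinusoid is left uncovered, costing at least $\bigl((\rho_k^0)^2-(\rho_{k+1}^0)^2\bigr)/2>0$ by Assumption 4, while if all $k$ dominant frequencies are covered the $\delta$-deviation must sit in an amplitude or phase, where the quadratic form $\frac{\rho_q^2}{2}+\frac{(\rho_q^0)^2}{2}-\rho_q\rho_q^0\kappa\cos(\cdot)$ is minimized at $\kappa=1$, reproducing the paper's constants. (In fairness, the paper never confronts this $O(1/N)$ regime either: its claimed uniform limits on $\Lambda^c$, a set containing frequencies arbitrarily close to the true ones, are asserted rather than proved; but your compactness mechanism makes this burden explicit, so it must be carried.) A final blemish, shared with the paper: $g$ vanishes not only at $\theta_k^0$ but at every coordinate permutation of $\theta_k^0$, and nontrivial permutations lie in $\Delta_\delta$, so the claim that Assumption 4 makes $\theta_k^0$ the unique minimiser, and indeed the lemma itself, is correct only modulo this relabelling ambiguity.
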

\begin{proof}

In the following we first show that on $\Delta _\delta$ the
sequence ${\cal L}_k(\theta _k ) - {\cal L}_k(\theta _k^0 )$
(indexed in $N,M$) is uniformly lower bounded by a strictly
positive constant as $\Psi (N,M) \to \infty$. Since the sequence
elements are uniformly lower bounded by a strictly positive
constant the sequence of infimums, $\mathop {\inf }\limits_{\theta
_k \in \Delta _\delta } \left( {{\cal L}_k(\theta _k ) - {\cal
L}_k(\theta _k^0 )} \right)$, is  uniformly lower bounded by the
same strictly positive constant as $\Psi (N,M) \to \infty$, and
hence, $\mathop {\liminf}\limits_{\Psi (N,M) \to \infty } \mathop
{\inf }\limits_{\theta _k \in \Delta _\delta } \left( {{\cal
L}_k(\theta _k ) - {\cal L}_k(\theta _k^0 )} \right)$.

Thus, we first prove that the sequence ${\cal L}_k(\theta _k ) -
{\cal L}_k(\theta _k^0 )$  is uniformly lower bounded away from
zero on $\Delta _\delta$ as $\Psi (N,M) \to \infty$.
\begin{equation}
\begin{array}{l}
 {\cal L}_k(\theta _k ) - {\cal L}_k(\theta _k^0 ) \\
 = \frac{1}{NM}\sum\limits_{n = 0}^{N-1} \sum\limits_{m = 0}^{M-1}
\bigg( y(n,m) - \sum\limits_{i = 1}^k \rho_i \cos (\omega _i n +
\upsilon _i m+\varphi_i)  \bigg)^2 \\
-\frac{1}{NM}\sum\limits_{n = 0}^{N-1} \sum\limits_{m = 0}^{M-1}
\bigg( y(n,m) - \sum\limits_{i = 1}^k \rho_i^0 \cos (\omega _i^0 n
+
\upsilon _i^0 m+\varphi_i^0)  \bigg)^2 \\
 = \frac{1}{NM}\sum\limits_{n = 0}^{N-1} \sum\limits_{m = 0}^{M-1}
\bigg( \sum\limits_{i = 1}^P \rho_i^0 \cos (\omega _i^0 n +
\upsilon _i^0 m+\varphi_i^0) + w(n,m) -\sum\limits_{i = 1}^k
\rho_i \cos (\omega _i n +
\upsilon _i m+\varphi_i) \bigg)^2  \\
 - \frac{1}{NM}\sum\limits_{n = 0}^{N-1} \sum\limits_{m = 0}^{M-1}
\bigg( \sum\limits_{i = k+1}^P \rho_i^0 \cos (\omega _i^0 n +
\upsilon _i^0 m+\varphi_i^0) + w(n,m) \bigg)^2 \\
 = \frac{1}{NM}\sum\limits_{n = 0}^{N-1} \sum\limits_{m = 0}^{M-1}
\bigg( \sum\limits_{i = 1}^k \rho_i^0 \cos (\omega _i^0 n +
\upsilon _i^0 m+\varphi_i^0)  - \sum\limits_{i = 1}^k \rho_i \cos
(\omega _i n +
\upsilon _i m+\varphi_i)  \bigg)^2 \\
 +  \frac{2}{NM}\sum\limits_{n = 0}^{N-1} \sum\limits_{m=0}^{M-1} \bigg( \sum\limits_{i = k+1}^P \rho_i^0 \cos (\omega _i^0 n +
\upsilon _i^0 m+\varphi_i^0) \bigg)\\
 \bigg( \sum\limits_{i =
1}^k \rho_i^0 \cos (\omega _i^0 n + \upsilon _i^0 m+\varphi_i^0) -
\sum\limits_{i = 1}^k \rho_i \cos (\omega _i n +
\upsilon _i m+\varphi_i)  \bigg)   \\
 +  \frac{2}{NM}\sum\limits_{n = 0}^{N-1} \sum\limits_{m=0}^{M-1} w(n,m) \bigg( \sum\limits_{i = 1}^k \rho_i^0 \cos (\omega _i^0 n +
\upsilon _i^0 m+\varphi_i^0)
 -\sum\limits_{i = 1}^k \rho_i \cos (\omega _i n +
\upsilon _i m+\varphi_i)  \bigg) \\
 = I_1 + I_2 + I_3. \\
 \end{array}
\label{l1}
\end{equation}
Thus, to check the asymptotic behavior of L.H.S. of (\ref{l1}) we
have to evaluate $ \mathop {\lim}\limits_{\Psi (N,M) \to \infty }
(I_1 + I_2 + I_3 )$ for all vectors  $\theta _k \in \Delta
_\delta$:
\begin{equation}
\begin{array}{l}
 \mathop {\lim }\limits_{\Psi (N,M) \to \infty }  I_1 =  \mathop {\lim
}\limits_{\Psi (N,M) \to \infty }  \frac{1}{NM}\sum\limits_{n =
0}^{N-1} \sum\limits_{m = 0}^{M-1} \left( \sum\limits_{i = 1}^k
\rho_i^0
\cos (\omega _i^0 n + \upsilon _i^0 m+\varphi_i^0)  \right)^2 \\
- \mathop {\lim}\limits_{\Psi (N,M) \to \infty }  \left[ 2
\frac{1}{NM}\sum\limits_{n = 0}^{N-1} \sum\limits_{m = 0}^{M-1}
\sum\limits_{i = 1}^k \sum\limits_{j = 1}^k\rho_i\rho_j^0 \cos
(\omega _i n + \upsilon _i m+\varphi_i)\cos (\omega _j^0 n +
\upsilon _j^0 m+\varphi_j^0) \right]\\
+ \mathop {\lim }\limits_{\Psi (N,M) \to \infty }
\frac{1}{NM}\sum\limits_{n = 0}^{N-1} \sum\limits_{m = 0}^{M-1}
\left( \sum\limits_{i = 1}^k \rho_i \cos (\omega _i n + \upsilon
_i m+\varphi_i) \right)^2
  = T_1 + T_2 + T_3.
 \end{array}
\label{l3a}
\end{equation}

Recall that for   $\vert \rho\vert < \infty $ and $\varphi\in
[0,2\pi)$
\begin{equation}
\mathop {\lim }\limits_{N \to \infty } \frac{1}{N}\sum\limits_{n=
0}^{N - 1} {\rho\cos (\omega n+\varphi)} = 0, \label{l4old}
\end{equation}
uniformly in $\omega$ on any closed interval in $(0,2\pi)$. The same
equality is hold for the sine function. Hence, due to Assumption 3
and (\ref{l4old}), we have
\begin{eqnarray}
 T_1 = \mathop {\lim}\limits_{\Psi (N,M) \to \infty }
\frac{1}{NM}\sum\limits_{n = 0}^{N-1} \sum\limits_{m =
0}^{M-1}\bigg( \sum\limits_{i = 1}^k \rho_i^0 \cos (\omega _i^0 n +
\upsilon _i^0 m+\varphi_i^0)  \bigg)^2 =\sum\limits_{i = 1}^k
\frac{(\rho_i^0 ) ^2}{2}, \label{14a}
\end{eqnarray}
independently of $\theta _k$.

 Also,
\begin{equation}
\begin{array}{l}
 T_3 = \mathop {\lim }\limits_{\Psi (N,M)
\to \infty } \frac{1}{NM}\sum\limits_{n = 0}^{N-1} \sum\limits_{m
= 0}^{M-1} \left( \sum\limits_{i = 1}^k \rho_i \cos (\omega _i n +
\upsilon
_i m+\varphi_i) \right)^2 =   \sum\limits_{i = 1}^k\frac{(\rho_i ) ^2}{2} \\
+ \mathop {\lim }\limits_{\Psi (N,M) \to \infty }
\frac{1}{NM}\sum\limits_{n = 0}^{N-1} \sum\limits_{m = 0}^{M-1}
\mathop {\sum\limits_{i = 1 }^k}\limits_{i\ne j}\sum\limits_{j =
1}^k \rho_i \rho_j \cos (\omega _i n + \upsilon _i m+\varphi_i)\cos
(\omega _j n + \upsilon
_j m+\varphi_j).  \\
 \end{array}
\end{equation}
Since the pairs $(\omega_i,\upsilon_i) $ are pairwise different,
then on any closed interval in $(0,2\pi)$ the sequence of partial
sums  $\frac{1}{NM}\sum\limits_{n = 0}^{N-1} \sum\limits_{m =
0}^{M-1} \mathop {\sum\limits_{i = 1 }^k}\limits_{i\ne
j}\sum\limits_{j = 1}^k \rho_i \rho_j \cos (\omega _i n + \upsilon
_i m+\varphi_i)\cos (\omega _j n + \upsilon _j m+\varphi_j)$
converges uniformly to zero as $\Psi (N,M) \to \infty$.

Hence,
\begin{equation}
\begin{array}{l}
 T_3 = \sum\limits_{i = 1}^k
\frac{(\rho_i ) ^2}{2}, \label{xxx}
\end{array}
\end{equation}
 as $\Psi (N,M) \to \infty$ uniformly on $\Delta _\delta$.

 Leaving
$T_2$ unchanged we obtain
\begin{equation}
\begin{array}{l}
 \mathop {\lim }\limits_{\Psi (N,M) \to \infty }  I_1 =  \sum\limits_{i = 1}^k \bigg(\frac{(\rho_i^0 ) ^2}{2}+\frac{(\rho_i ) ^2}{2}\bigg)  \\
- \mathop {\lim }\limits_{\Psi (N,M) \to \infty }
\frac{2}{NM}\sum\limits_{n = 0}^{N-1} \sum\limits_{m = 0}^{M-1}
\sum\limits_{i = 1}^k \sum\limits_{j = 1}^k\rho_i\rho_j^0 \cos
(\omega _i n + \upsilon _i m+\varphi_i)\cos (\omega _j^0 n +
\upsilon _j^0 m+\varphi_j^0),
 \end{array}
\label{l3}
\end{equation}
uniformly on $\Delta _\delta$.

Using the similar considerations to those employed in the
evaluation of (\ref{14a}) we obtain
 \begin{equation}
\begin{array}{l}
 \mathop {\lim }\limits_{\Psi (N,M) \to \infty }  I_2 = \mathop {\lim
}\limits_{\Psi (N,M) \to \infty }
\bigg[\frac{2}{NM}\sum\limits_{n = 0}^{N-1}
\sum\limits_{m=0}^{M-1} \bigg( \sum\limits_{i = k+1}^P \rho_i^0
\cos (\omega _i^0 n +
\upsilon _i^0 m+\varphi_i^0) \bigg)\\
 \bigg( \sum\limits_{i =
1}^k \rho_i^0 \cos (\omega _i^0 n + \upsilon _i^0 m+\varphi_i^0) -
\sum\limits_{i = 1}^k \rho_i \cos (\omega _i n +
\upsilon _i m+\varphi_i)  \bigg)\bigg]    \\
 = - \mathop {\lim}\limits_{\Psi (N,M) \to \infty }  \left[
\frac{2}{NM}\sum\limits_{n = 0}^{N-1} \sum\limits_{m = 0}^{M-1}
\sum\limits_{i = 1}^k \sum\limits_{j = k+1}^P\rho_i\rho_j^0 \cos
(\omega _i n + \upsilon _i m+\varphi_i)\cos (\omega _j^0 n +
\upsilon _j^0 m+\varphi_j^0) \right].
 \end{array}
\label{l5}
\end{equation}
By Lemma 3 in Appendix C, we have that a.s. as $ \Psi (N,M) \to
\infty $ :
\begin{equation}
\mathop {\sup }\limits_{\theta _k \in \Delta _\delta } \left |
\frac{2}{NM}\sum\limits_{n = 0}^{N-1} \sum\limits_{m=0}^{M-1}
w(n,m) \bigg( \sum\limits_{i = 1}^k \rho_i^0 \cos (\omega _i^0 n +
\upsilon _i^0 m+\varphi_i^0)
 -\sum\limits_{i = 1}^k \rho_i \cos (\omega _i n +
\upsilon _i m+\varphi_i)  \bigg) \right | \to 0. \label{l6}
\end{equation}
Hence $I_3 \to 0\enskip a.s.$  as $ \Psi (N,M) \to \infty $
uniformly on $\Delta _\delta$. Using (\ref{l3}), (\ref{l5}) and
(\ref{l6}) we conclude that a.s.
\begin{equation} \begin{array}{l}
 \mathop {\lim}\limits_{\Psi (N,M) \to \infty }  \left( {{\cal L}_k(\theta
_k ) - {\cal L}_k(\theta _k^0 )} \right)=  \sum\limits_{i = 1}^k \bigg(\frac{(\rho_i^0 ) ^2}{2}+\frac{(\rho_i ) ^2}{2}\bigg)  \\
- \mathop {\lim}\limits_{\Psi (N,M) \to \infty }
\frac{2}{NM}\sum\limits_{n = 0}^{N-1} \sum\limits_{m = 0}^{M-1}
\sum\limits_{i = 1}^k \sum\limits_{j = 1}^P\rho_i\rho_j^0 \cos
(\omega _i n + \upsilon _i m+\varphi_i)\cos (\omega _j^0 n +
\upsilon _j^0 m+\varphi_j^0).
 \end{array}
\label{l7}
\end{equation}
To complete the evaluation of (\ref{l7}) we consider the vectors
$\theta_k \in \Delta _\delta$. Let us first assume that
$\Delta_\delta \equiv {\cal R}_{q \delta} $ for some $q$, $1 \le q
\le k$. Thus, the coordinate $\rho_q$ of each vector in this subset
is different from the corresponding  coordinate $\rho_q^0$ by at
least $\delta>0$. Consider first the case where all the other
elements of the vector $\theta_k \in {\cal R}_{q \delta }$ are
identical to the corresponding elements of $\theta _k^0 $. Since by
this assumption $\omega_j=\omega_j^0$, $\upsilon_j=\upsilon_j^0$,
$\varphi_j=\varphi_j^0$ for $1 \le j \le k$, and $\rho_j=\rho_j^0$
for $1 \le j \le k$, $j \neq q$, on this set we have
\begin{eqnarray}
&&\mathop {\lim}\limits_{\Psi (N,M) \to \infty }  \left( {{\cal
L}_k(\theta _k ) - {\cal L}_k(\theta _k^0 )} \right) = \bigg(
\frac{\rho_q^0}{\sqrt{2}} - \frac{\rho_q}{\sqrt{2}} \bigg) ^2
\nonumber \\
&&-\mathop {\lim }\limits_{\Psi (N,M) \to \infty }
\frac{2}{NM}\sum\limits_{n = 0}^{N-1} \sum\limits_{m = 0}^{M-1}
\mathop{\sum\limits_{i = 1}^k}\limits_{i\ne j}\sum\limits_{j =
1}^P\rho_i\rho_j^0 \cos (\omega _i n + \upsilon _i
m+\varphi_i)\cos (\omega _j^0 n +
\upsilon _j^0 m+\varphi_j^0)\nonumber \\
&&= \bigg( \frac{\rho_q^0}{\sqrt{2}} - \frac{\rho_q}{\sqrt{2}}
\bigg) ^2\ge \frac{\delta^2}{2} > 0, \label{l8}
\end{eqnarray}
uniformly in $\rho_q$, where the second equality is due to
Assumption 3 and following the arguments employed to obtain
(\ref{xxx}).

Assume next that $\theta_k \in {\cal R}_{ q \delta }$ (\ie, the
coordinate $\rho_q$  is different from the corresponding
coordinate $\rho_q^0$ by at least $\delta>0$) and that in
addition, there exists an element $\rho_t$ of $\theta_k$, such
that $1 \le t \le k$,  $t \ne q$ and $\vert \rho_t - \rho_t^0
\vert \ge \lambda ,\lambda > 0$ while all the other elements of
the vector $\theta_k$ are identical to the corresponding elements
of $\theta _k^0 $. Following a similar derivation to the one in
(\ref{l8}) we conclude that

\begin{equation}
\mathop {\lim}\limits_{\Psi (N,M) \to \infty }  \left( {{\cal
L}_k(\theta _k ) - {\cal L}_k(\theta _k^0 )} \right) = \bigg(
\frac{\rho_q^0}{\sqrt{2}} - \frac{\rho_q}{\sqrt{2}} \bigg)^2 +
\bigg( \frac{\rho_t^0}{\sqrt{2}} - \frac{\rho_t}{\sqrt{2}}
\bigg)^2 \ge \frac{\delta^2}{2}+\frac{\lambda^2}{2}
>\frac{\delta^2}{2}, \label{l8b}
\end{equation}
uniformly in $\rho_q$ and $\rho_t$.

 Consider the case where $\theta_k
\in {\cal R}_{ q \delta }$ while there exists an element
$\varphi_l$ of $\theta_k \in {\cal R}_{ q \delta }$, such that
$\vert \varphi_l - \varphi_l^0 \vert \ge \eta ,\eta
> 0$ and  all the other elements of the vector $\theta_k$ are
identical to the corresponding elements of $\theta _k^0$.
Following a similar derivation to the one in (\ref{l8}) we
conclude that
\begin{equation}
\mathop {\lim}\limits_{\Psi (N,M) \to \infty }  \left( {{\cal
L}_k(\theta _k ) - {\cal L}_k(\theta _k^0 )} \right) =\biggl\{
\begin{array}{*{20}c}
 \bigg(
\frac{\rho_q^0}{\sqrt{2}} - \frac{\rho_q}{\sqrt{2}} \bigg)^2 +(\rho_l^0)^2  -(\rho_l^0)^2\cos(\varphi_l-\varphi_l^0),\quad l \ne q  \\
 \frac{(\rho_q^0)^2}{2}+ \frac{(\rho_q)^2}{2}-\rho_q^0\rho_q\cos(\varphi_q-\varphi_q^0),\quad l = q  \\
\end{array}
 >\frac{\delta^2}{2},
\label{l8c}
\end{equation}
uniformly in $\rho_q$ and $\varphi_l$.

 Finally, consider the case
where $\theta_k \in {\cal R}_{ q \delta }$ while there exists an
element $\omega_l$ of $\theta_k \in {\cal R}_{ q \delta }$, such
that $\vert \omega_l - \omega_l^0 \vert \ge \eta ,\eta > 0$ and
all the other elements of the vector $\theta_k$ are identical to
the corresponding elements of $\theta _k^0$. Following a similar
derivation to the one in (\ref{l8}) we conclude that
\begin{equation}
\mathop {\liminf}\limits_{\Psi (N,M) \to \infty }  \left( {{\cal
L}_k(\theta _k ) - {\cal L}_k(\theta _k^0 )} \right) =\biggl\{
\begin{array}{*{20}c}
 \bigg(
\frac{\rho_q^0}{\sqrt{2}} - \frac{\rho_q}{\sqrt{2}} \bigg)^2 +(\rho_l^0)^2,\quad l \ne q  \\
 \frac{(\rho_q^0)^2}{2}+ \frac{(\rho_q)^2}{2},\quad l = q  \\
\end{array}
 >\frac{\delta^2}{2},
\label{l8c1}
\end{equation}
uniformly in $\rho_q$ and  $\omega_l$.

From the above analysis it is clear that $\mathop
{\lim}\limits_{\Psi (N,M) \to \infty }  \left( {{\cal L}_k(\theta
_k ) - {\cal L}_k(\theta _k^0 )} \right)$ is lower bounded by
$\frac{\delta^2}{2}$ uniformly in ${\cal R}_{q \delta }$.

Following similar reasoning, the next subset we consider is $W_{q
\delta } \cup V_{q \delta }$. We first consider a subset of this
set:
\begin{equation}
\Lambda = \left\{ {\theta _k \in W_{q\delta } \cup V_{q\delta }
:\;\,\exists p,\,k + 1 \le p \le P,\;(\omega _q ,\upsilon _q ) =
(\omega _p^0 ,\upsilon _p^0 )\;} \right\} \subset W_{q\delta }
\cup V_{q\delta } \label{l9}
\end{equation}
This subset includes vectors in $\Theta _k $, such that their
coordinate pairs $(\omega _q ,\upsilon _q )$ are different from
the corresponding pairs of $\theta _k^0 $ and equal to some pair
$(\omega _p^0 ,\upsilon _p^0 )$ where $p \ge k + 1$.
 As above, the minimum is obtained when all the other elements of $\theta _k$ are identical to the corresponding elements
of $\theta _k^0 $. Hence, uniformly on $\Lambda$, we have
\begin{equation}
\begin{array}{l}
 \mathop {\lim}\limits_{\Psi (N,M) \to \infty }  \left( {{\cal L}_k(\theta _k ) -
{\cal L}_k(\theta _k^0 )} \right) \geq   {\frac{(\rho_q^0)^2}{2}+ \frac{(\rho_q)^2}{2} - \rho_p^0\rho_q }  \\
 =  {\frac{(\rho_q^0)^2}{2}-\frac{(\rho_p^0)^2}{2} + \bigg(
\frac{\rho_p^0}{\sqrt{2}} - \frac{\rho_q}{\sqrt{2}} \bigg)^2} \geq \frac{(\rho_q^0)^2}{2}-\frac{(\rho_p^0)^2}{2} = \epsilon_{\Lambda} > 0, \\
 \end{array}
\label{l10}
\end{equation}
where the last inequality is due to Assumption 4.

On the complementary set:
\begin{equation}
\Lambda ^c = \left( {W_{q \delta } \cup V_{q \delta } }
\right)\backslash \Lambda = \left\{ {\theta _k \in W_{q \delta }
\cup V_{q \delta } :\;(\omega _q ,\upsilon _q ) \ne (\omega _p^0
,\upsilon _p^0 ),\,\forall p,\,k + 1 \le p \le P\;} \right\}
\label{l11}
\end{equation}
we have
\begin{equation}
\begin{array}{l}
 \mathop {\lim}\limits_{\Psi (N,M) \to \infty }  \left( {{\cal L}_k(\theta _k ) -
{\cal L}_k(\theta _k^0 )} \right) \geq \frac{(\rho_q^0)^2}{2}+ \frac{(\rho_q)^2}{2}\geq \frac{(\rho_q^0)^2}{2}=\epsilon_{\Lambda^c}> 0. \\
 \\
 \end{array}
\label{l12}
\end{equation}

Finally, on the set $\Phi_{q\delta }$  the coordinate $\varphi_q$
of the each vector in this subset is different from the
corresponding coordinate $\varphi_q^0$ by at least $\delta>0$. As
in previous cases , the minimum is obtained when all the other
elements of $\theta _k \in \Phi_{q\delta }$ are identical to the
corresponding elements of $\theta _k^0 $. Hence, uniformly on
$\Phi_{q\delta }$, we have
\begin{equation}
\begin{array}{l}
 \mathop {\lim}\limits_{\Psi (N,M) \to \infty }  \left( {{\cal L}_k(\theta
_k ) - {\cal L}_k(\theta _k^0 )} \right) \geq (\rho_q^0)^2 - (\rho_q^0)^2 \cos(\varphi_q-\varphi_q^0)\geq (\rho_q^0)^2(1-\cos \delta) = \epsilon_{\Phi_{q\delta }}> 0. \\
 \end{array}
\label{l13}
\end{equation}

Let
$\epsilon_q=\min(\frac{\delta^2}{2},\epsilon_{\Lambda},\epsilon_{\Lambda^c},\epsilon_{\Phi_{q\delta
}})$. Collecting (\ref{l8}),(\ref{l10}), (\ref{l12}) and
(\ref{l13}) together we conclude that the sequence ${\cal
L}_k(\theta _k ) - {\cal L}_k(\theta _k^0 )$ is lower bounded by
$\epsilon_q >0$ uniformly on ${\cal R}_{ q \delta } \cup
\Phi_{q\delta } \cup W_{q \delta } \cup V_{q \delta }$ as $\Psi
(N,M) \to \infty$.

By repeating the same arguments for every $q$, $1\leq q \leq k$,
and
 by letting $\epsilon=\min(\epsilon_1,\dots,\epsilon_k)$, we conclude
that the sequence ${\cal L}_k(\theta _k ) - {\cal L}_k(\theta _k^0
)$ (indexed in $N,M$) is lower bounded by $\epsilon >0$ uniformly
on $\Delta_\delta$ as $\Psi (N,M) \to \infty$.

Hence, it follows that sequence $\mathop {\inf }\limits_{\theta _k
\in \Delta _\delta } ({\cal L}_k(\theta _k ) - {\cal L}_k(\theta
_k^0 ))$ (indexed in $N,M$)  is also asymptotically lower bounded
by $\epsilon
>0$, \ie,
\begin{equation}
 \mathop {\inf
}\limits_{\theta _k \in \Delta _\delta } \left( {{\cal L}_k(\theta
_k ) - {\cal L}_k(\theta _k^0 )} \right) \geq \epsilon,
\end{equation}
as $\Psi (N,M) \to \infty$.

Hence, by the definition of $\mathop {\liminf}$
\begin{equation}
\mathop {\liminf}\limits_{\Psi (N,M) \to \infty } \mathop {\inf
}\limits_{\theta _k \in \Delta _\delta } \left( {{\cal L}_k(\theta
_k ) - {\cal L}_k(\theta _k^0 )} \right) \geq \epsilon  > 0.
\label{l8a}
\end{equation}

\end{proof}

\vspace{0.3in}

%
%
\appendix{\large {\bf Appendix B}}
\begin{lemma}
\label{Wulm} Let $\{ x_n,n \ge 1 \}$ be a sequence of random
variables. Then
\begin{equation}
\Pr \{x_n \le 0\ i.o. \} \le \Pr\{\mathop {\liminf }\limits_{n \to
\infty} x_n \le 0\}
\end{equation}
\end{lemma}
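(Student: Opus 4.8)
The plan is to reduce the probabilistic inequality to a deterministic (sample-path) set inclusion together with the monotonicity of the probability measure. The essential observation is that the event $\{x_n \le 0\ i.o.\}$ is, by definition of ``infinitely often'', the set $\limsup_{n}\{x_n \le 0\} = \bigcap_{N \ge 1}\bigcup_{n \ge N}\{x_n \le 0\}$, and the claim I would prove is that this set is contained in the event $\{\liminf_{n \to \infty} x_n \le 0\}$. Both sets are measurable (the former as a countable intersection of countable unions of the measurable sets $\{x_n \le 0\}$, the latter because the $\liminf$ of measurable functions is measurable), so once the inclusion is in hand the lemma is immediate.

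First I would fix an outcome $\omega$ belonging to the i.o. event, so that $x_n(\omega) \le 0$ for infinitely many indices $n$. Recalling the definition $\liminf_{n \to \infty} x_n(\omega) = \sup_{N \ge 1} \inf_{m \ge N} x_m(\omega)$, I would argue that for every fixed $N$ the tail infimum satisfies $\inf_{m \ge N} x_m(\omega) \le 0$: since the inequality $x_n(\omega) \le 0$ holds infinitely often, there is always at least one index $m \ge N$ with $x_m(\omega) \le 0$, so the infimum taken over $m \ge N$ cannot exceed $0$. Taking the supremum over $N$ of a family of quantities each bounded above by $0$ preserves that bound, whence $\liminf_{n \to \infty} x_n(\omega) \le 0$; that is, $\omega$ lies in $\{\liminf_{n \to \infty} x_n \le 0\}$.

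This establishes the inclusion $\{x_n \le 0\ i.o.\} \subseteq \{\liminf_{n \to \infty} x_n \le 0\}$, and applying $\Pr$ to both sides, using monotonicity of the measure, yields the assertion of the lemma. I do not anticipate any genuine obstacle: the lemma is essentially a restatement of the definition of $\liminf$. The only point demanding care is correctly handling the nested $\sup_{N}\inf_{m \ge N}$ structure, namely verifying that ``infinitely often'' forces \emph{each} tail infimum to be nonpositive \emph{before} the outer supremum is taken, rather than attempting to pass a limit through a single-layer argument.
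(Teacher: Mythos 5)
Your proposal is correct and follows essentially the same route as the paper: both establish the set inclusion $\{x_n \le 0\ i.o.\} \subseteq \{\liminf_{n\to\infty} x_n \le 0\}$ (the paper via explicit set algebra on $\bigcap_m \bigcup_{n \ge m}\{x_n \le 0\}$, you via the equivalent pointwise argument that each tail infimum is nonpositive before taking the supremum) and then conclude by monotonicity of the probability measure.
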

\begin{proof}
Let $(\Omega,\Sigma,p)$ be some probability space. Let $\{
x_n(\omega),n \ge 1 \}$ be a sequence of random variables. Let
$\{A_n \in \Sigma,n \ge 1 \}$ be a sequence of subsets of
$\Omega$, such that $A_n = \{ \omega \in \Omega : x_n( \omega )
\le  0 \}$. Define
\begin{equation}
A^m_n={\bigcup\limits_{n = m}^\infty {\{\omega :x_n \le 0 \}} } \
.
\end{equation}
Then
\begin{equation}
A^m_n \subseteq
 \{\omega :\mathop {\inf }\limits_{n \ge m} x_n \le
0 \} \ .
\end{equation}
Hence
\begin{equation}
\bigcap\limits_{m 1}^\infty A^m_n \subseteq \bigcap\limits_{m
1}^\infty  \{\omega :\mathop {\inf }\limits_{n \ge m} x_n \le 0 \}
\ . \label{el3}
\end{equation}
Consider the R.H.S. of (\ref{el3}), and let $y_m(\omega)=\mathop
{\inf }\limits_{n \ge m} x_n$. Since for all $\omega \in
\bigcap\limits_{m 1}^\infty  \{\omega :\mathop {\inf }\limits_{n
\ge m} x_n \le 0 \}$, $y_m(\omega) \le 0$ for all $m$, then by
definition $\mathop {\sup }\limits_m y_m(\omega) \le 0$ as well.
On the other hand if $\mathop {\sup }\limits_m y_m(\omega) \le 0$,
then for all $m$, $y_m(\omega) \le 0$. Hence we have the following
set equality
\begin{equation}
\bigcap\limits_{m 1}^\infty  \{\omega :\mathop {\inf }\limits_{n
\ge m} x_n \le 0 \} =  \{\omega :\mathop {\sup }\limits_m \mathop
{\inf }\limits_{n \ge m} x_n \le 0 \}.
\end{equation}
Rewriting (\ref{el3}) we have
\begin{equation}
\bigcap\limits_{m = 1}^\infty \bigcup\limits_{n = m}^\infty A_n
\subseteq \{\omega :\mathop {\sup }\limits_m \mathop {\inf
}\limits_{n \ge m} x_n \le 0 \}=\{ \omega : \mathop {\liminf
}\limits_{n \to \infty} x_n(\omega) \le 0\}, \label{pp}
\end{equation}
where the equality on the R.H.S. of (\ref{pp}) follows from the
definition of $\mathop {\liminf }\limits_{n \to \infty}(\cdot)$ of
a sequence $x_n$. Also by definition, $\bigcap\limits_{m 1}^\infty
\bigcup\limits_{n = m}^\infty A_n=\mathop {\limsup }\limits_{n \to
\infty} A_n$. Hence, (see, \eg, \cite{chung}, p. 67)
\begin{equation}
\mathop {\limsup }\limits_{n \to \infty} A_n =\{\omega:
x_n(\omega) \le 0\ i.o. \} \subseteq \{ \omega : \mathop {\liminf
}\limits_{n \to \infty} x_n(\omega) \le 0\}.
\end{equation}
Due to the monotonicity of the probability measure, the lemma
follows.
\end{proof}

\vspace{0.3in}

%
%
\appendix{\large {\bf Appendix C}}

Let $D$ be an \emph{infinite} order non-symmetrical half-plane
support defined as in (\ref{d}) and let $D(k,l)$ be a \emph{finite}
order non-symmetrical half-plane support, defined by

\begin{equation}
D(k,l)=\left\{(i,j)\in \mathbb{Z}^2: i=0, 0 \leq j \leq l
\right\}\cup \left\{(i,j)\in \mathbb{Z}^2: 0<i\leq k, -l \leq j \leq
l \right\}
\end{equation}

Let the field $\{ w(n,m) \}$ be defined as in (\ref{e3}), 
%
and the field $\{u(n,m)\}$ is an i.i.d. real valued zero-mean random
field with finite second order moment, $\sigma^2$. The sequence
${a(i,j)}$ is a square summable deterministic sequence,

\begin{equation}
\sum_{(r,s) \in D}a^2(r,s) < \infty. \label{lemma1}
\end{equation}

The next lemma is an extension of a lemma originally proposed by
Hannan, \cite{hannan} for the case of 1-D signals. Similar result
can be found in \cite{kundu}, Lemma 2, but with only a partial
proof. Since this lemma is crucial for our work we will prove it
here.

\begin{lemma}
\label{XXX}
\begin{equation}
\mathop {\sup }\limits_{\omega,\upsilon } \left |
\frac{1}{NM}\sum\limits_{n = 0}^{N-1} \sum\limits_{m=0}^{M-1} w(n,m)
\cos{(\omega n +\nu m)}
  \right | \to 0 \ \mbox{a.s.} \ \mbox{as} \ \Psi (N,M) \to \infty
\end{equation}
\end{lemma}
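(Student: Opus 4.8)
The plan is to prove the statement for the complex exponential kernel $e^{j(\omega n+\nu m)}$ and then recover the cosine version from $\left|\sum w(n,m)\cos(\omega n+\nu m)\right|\le\left|\sum w(n,m)e^{j(\omega n+\nu m)}\right|$, so it suffices to show that
\[
\phi_{N,M}(\omega,\nu)=\frac{1}{NM}\sum_{n=0}^{N-1}\sum_{m=0}^{M-1}w(n,m)e^{j(\omega n+\nu m)}
\]
tends to zero uniformly in $(\omega,\nu)$, almost surely. The difficulty is that in this appendix only a finite second moment and square summability of $\{a(r,s)\}$ are available, so neither high moments nor absolute summability may be used. I would therefore organize the argument around two nested truncations — one in the MA order $K$ and one in the innovation magnitude $c$ — and in each case dispose of the resulting tail by the crude bound $\left|\sum(\cdot)e^{j(\cdots)}\right|\le\sum|\cdot|$, which is uniform in $(\omega,\nu)$ for free, followed by a law-of-large-numbers statement. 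Passing $K,c\to\infty$ through countable sequences and intersecting the almost-sure events then gives $\limsup\sup_{\omega,\nu}|\phi_{N,M}|=0$.

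First I would establish an auxiliary uniform strong law for the white field: for every fixed shift $(r,s)$,
\[
\sup_{\omega,\nu}\left|\frac{1}{NM}\sum_{p,q}u(p-r,q-s)e^{j(\omega p+\nu q)}\right|\to0\quad\text{a.s.}
\]
Here I would split $u=\bar u+\tilde u$, where $\bar u$ is the centred truncation $u\mathbf{1}_{\{|u|\le c\}}$ minus its mean and $\tilde u$ is the centred tail. For the bounded part the map $(\omega,\nu)\mapsto V(\omega,\nu)$ is a trigonometric polynomial whose gradient is $O(N+M)$ deterministically, so a grid of spacing $(NM)^{-2}$ reduces the supremum to a maximum over $O((NM)^{4})$ points; since $\bar u$ is bounded all its moments are finite, Rosenthal-type bounds give $E|V|^{2p}=O((NM)^{-p})$, and a union bound with $p$ large together with Borel--Cantelli yields uniform a.s. convergence to zero. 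For the tail part I would simply use $\sup_{\omega,\nu}\big|\frac{1}{NM}\sum\tilde u\,e^{j(\cdots)}\big|\le\frac{1}{NM}\sum|\tilde u|\to E|\tilde u(0,0)|$ a.s. by the SLLN, which is small once $c$ is large because $E|\tilde u|\le(E\tilde u^{2})^{1/2}\to0$. The same argument covers shifted windows: the shifted and standard windows differ on $O(rM+sN)$ boundary sites, so their normalized transforms differ by at most $\frac{1}{NM}\sum_{\text{boundary}}|u|$, which tends to zero a.s. for fixed $(r,s)$.

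Next I would handle the colour by truncating the MA order. Writing $w=w^{(K)}+\tilde w^{(K)}$ with $w^{(K)}$ supported on the finite set $D(K,K)$, the change of variables $p=n-r,\ q=m-s$ turns the finite part into
\[
\frac{1}{NM}\sum_{n,m}w^{(K)}(n,m)e^{j(\omega n+\nu m)}=\sum_{(r,s)\in D(K,K)}a(r,s)e^{j(\omega r+\nu s)}V_{r,s}(\omega,\nu),
\]
a finite linear combination of shifted white-field transforms, whose supremum therefore tends to zero almost surely by the auxiliary lemma. For the tail I would again invoke $\sup_{\omega,\nu}\big|\frac{1}{NM}\sum\tilde w^{(K)}e^{j(\cdots)}\big|\le\frac{1}{NM}\sum|\tilde w^{(K)}(n,m)|$; since $\tilde w^{(K)}$ is a stationary, hence ergodic, linear field, the two-dimensional ergodic theorem gives $\frac{1}{NM}\sum|\tilde w^{(K)}|\to E|\tilde w^{(K)}(0,0)|\le\sigma\big(\sum_{(r,s)\notin D(K,K)}a^{2}(r,s)\big)^{1/2}$, which tends to zero as $K\to\infty$ by square summability. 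Combining the two pieces, for each fixed $K$ one has $\limsup\sup_{\omega,\nu}|\phi_{N,M}|\le\sigma\big(\sum_{(r,s)\notin D(K,K)}a^{2}(r,s)\big)^{1/2}$ a.s., and letting $K\to\infty$ completes the proof.

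The main obstacle is obtaining uniformity in $(\omega,\nu)$ under only a second-moment hypothesis: a naive grid-plus-Chebyshev estimate fails because a polynomial-size grid is not compensated by the $O(1/NM)$ second moment, and absolute summability is unavailable for a spectral bound. The resolution I would rely on is that every tail term — whether from innovation truncation or from MA-order truncation — can be bounded, uniformly in frequency and with no loss, by the corresponding sum of absolute values, after which an ordinary (ergodic) strong law suffices; the delicate grid-and-moment estimate is then applied only to a bounded, finite-order object, where all moments exist and the argument goes through cleanly.
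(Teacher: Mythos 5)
Your proposal is correct in substance, but it reaches the crucial uniformity step by a genuinely different route than the paper. The outer structure is the same: reduce cosine to the complex exponential, split $w$ into a finite-order MA part plus a tail, kill the tail by the crude bound $\sup_{\omega,\nu}|\frac{1}{NM}\sum(\cdot)e^{j(\cdot)}|\le\frac{1}{NM}\sum|\cdot|\le(\frac{1}{NM}\sum(\cdot)^2)^{1/2}$ followed by a strong law (the paper's "SLLN" on $z^2$ is, like your ergodic-theorem invocation, really a stationarity/ergodicity argument — the two are on equal footing here), and reduce the finite part to shifted transforms of the i.i.d.\ field. Where you diverge is the uniform a.s.\ convergence for the white field itself. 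The paper uses Hannan's device: bound $\sup_{\omega,\nu}|\cdot|^2$ by the frequency-independent diagonal term plus the absolute values of all lag-product sums, compute $E[\sup|\cdot|^2]=O((NM)^{-1/2})$ using only second moments, and then — precisely because $O((NM)^{-1/2})$ is not summable — run Chebyshev and Borel--Cantelli along the sparse subsequence $N(R)\sim R^\delta$, $M(S)\sim S^\delta$, $\delta>2$, followed by the interpolation terms $I_1,I_2,I_3$ to fill in between subsequence points. You instead truncate the innovations, apply a grid/chaining argument with Rosenthal moment bounds of arbitrarily high order to the bounded part, and dispose of the innovation tail by the same absolute-value-plus-SLLN device. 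Your route buys a much stronger probability bound ($O((NM)^{4-p})$ for any $p$) at grid points, so plain Borel--Cantelli suffices and the delicate Doob-style subsequence interpolation disappears entirely; the price is the extra truncation level and the chaining machinery, neither of which the paper needs. One detail you should make explicit: since the sequence $\Psi_i$ in Definition 1 is arbitrary (it may grow slowly), Borel--Cantelli should be applied to the full two-parameter family of pairs $(N,M)$ — with $p\ge 6$ the bound $\sum_{N,M}(NM)^{4-p}<\infty$ makes this work, and each pair occurs only finitely often along $\Psi_i$ — otherwise summability over the index $i$ alone can fail. Your diagnosis of why the naive grid-plus-Chebyshev argument fails under second moments, and why truncation restores it, is exactly right and is in effect the reason the paper is forced into its subsequence argument.
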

\begin{proof}

First, it is easy to see that,
\begin{eqnarray}
&&\mathop {\sup }\limits_{\omega,\upsilon } \left |
\frac{1}{NM}\sum\limits_{n = 0}^{N-1} \sum\limits_{m=0}^{M-1} w(n,m)
\cos{(\omega n +\nu m)}
  \right |\leq  \nonumber \\
  &&\mathop {\sup }\limits_{\omega,\upsilon } \left |
\frac{1}{2NM}\sum\limits_{n = 0}^{N-1} \sum\limits_{m=0}^{M-1}
w(n,m) e^{j(\omega n +\nu m)}
  \right | + \mathop {\sup }\limits_{\omega,\upsilon } \left |
\frac{1}{2NM}\sum\limits_{n = 0}^{N-1} \sum\limits_{m=0}^{M-1}
w(n,m) e^{-j(\omega n +\nu m)}
  \right |.
\end{eqnarray}

Hence it is sufficient to prove the lemma for exponentials, \ie, we
wish to prove that
\begin{equation}
\mathop {\sup }\limits_{\omega,\upsilon } \left |
\frac{1}{NM}\sum\limits_{n = 0}^{N-1} \sum\limits_{m=0}^{M-1} w(n,m)
e^{j(\omega n +\nu m)}
  \right | \to 0 \ \mbox{a.s.} \ \mbox{as} \ \Psi (N,M) \to \infty
\end{equation}

Define the set $D(k,l)^C=D\setminus D(k,l)$. Then,
\begin{equation}
w(n,m)=\sum_{ D(k,l)} a(r,s) u(n-r,m-s) + \sum_{ D(k,l)^C} a(r,s)
u(n-r,m-s)= v(n,m)+z(n,m). \label{lemma2a}
\end{equation}

Then,
\begin{eqnarray}
\mathop {\sup }\limits_{\omega,\upsilon } \left |
\frac{1}{NM}\sum\limits_{n = 0}^{N-1} \sum\limits_{m=0}^{M-1} z(n,m)
e^{j(\omega n +\nu m)}
  \right |\leq\frac{1}{NM}\sum\limits_{n = 0}^{N-1}
  \sum\limits_{m=0}^{M-1}|z(n,m)| \leq \left\{\frac{1}{NM}\sum\limits_{n = 0}^{N-1}
  \sum\limits_{m=0}^{M-1}z^2(n,m)\right\}^{\frac{1}{2}}.
\end{eqnarray}

By the SLLN, the R.H.S. of the last inequality convergence, almost
surely, to
\begin{equation}
{ E}[z(0,0)^2]^{\frac{1}{2}}=\left(\sigma^2\sum_{ D(k,l)^C}
a(r,s)^2\right)^{\frac{1}{2}},
\end{equation}
which due to (\ref{lemma1}) may be made arbitrary small by taking
$k$ and $l$ sufficiently large.

Hence it is sufficient to prove the lemma with $w(n,m)$ replaced by
$v(n,m)$.

\begin{eqnarray}
\hspace{-.3in} \mathop {\sup }\limits_{\omega,\upsilon } \left |
\frac{1}{NM}\sum\limits_{n = 0}^{N-1} \sum\limits_{m=0}^{M-1} v(n,m)
e^{j(\omega n +\nu m)}
  \right |\leq \sum_{ D(k,l)} |a(r,s)|\mathop {\sup }\limits_{\omega,\upsilon } \left |
\frac{1}{NM}\sum\limits_{n = 0}^{N-1} \sum\limits_{m=0}^{M-1}
u(n-r,m-s) e^{j(\omega n +\nu m)}\right |.
\end{eqnarray}

Since the summation is finite and $\{u(n,m)\}$ is i.i.d., it is
sufficient to prove the lemma with $w(n,m)$ replaced by $u(n,m)$.
Thus, we consider the mean square of the discussed supremum

\begin{eqnarray}
&&
{E}\left[\mathop {\sup }\limits_{\omega,\upsilon } \left |
\frac{1}{NM}\sum\limits_{n = 0}^{N-1} \sum\limits_{m=0}^{M-1} u(n,m)
e^{j(\omega n +\nu m)}
  \right |^2 \right] \nonumber \\
  &&={E}\left[\mathop {\sup }\limits_{\omega,\upsilon }
\frac{1}{(NM)^2}\sum\limits_{n = 0}^{N-1}
\sum\limits_{m=0}^{M-1}\sum\limits_{k = 0}^{N-1}
\sum\limits_{l=0}^{M-1} u(n,m)u(k,l) e^{j(\omega (n-k) +\nu (m-l))}
   \right].
   \label{lemma2}
\end{eqnarray}

By letting,

\begin{equation}
\label{eq23}
\begin{array}{l}
 n - k = p, \\
 m - l = r,
 \end{array}
\end{equation}

substitute,

\begin{equation}
\begin{array}{l}
\label{eq24} \sum\limits_{n = 0}^{N - 1} \sum\limits_{k = 0}^{N - 1}
= \sum\limits_{\vert
p\vert < N} \sum\limits_{n \in S_N },\\
\sum\limits_{m = 0}^{M - 1} \sum\limits_{l = 0}^{M - 1} =
\sum\limits_{\vert r\vert < M} \sum\limits_{m \in S_M },
 \end{array}
\end{equation}

where,
\begin{equation}
\label{eq25}
\begin{array}{l}
S_N = \{n \in \mathbb{Z} :\max (0,p) \le n \le \min (N-1,p + N-1)\},\\
S_M = \{m \in \mathbb{Z} :\max (0,r) \le m \le \min (M-1,r + M-1)\},
 \end{array}
\end{equation}

and,

\begin{equation}
\label{eq26}
\begin{array}{l}
\sum\limits_{n \in S_N } 1 = \left\{ {{\begin{array}{*{20}c}
 {N - p,\quad p \ge 0} \hfill \\
 {N + p,\quad p < 0} \hfill \\
\end{array} } = N - \vert p\vert } \right. ,\\
\sum\limits_{m \in S_M } 1 = \left\{ {{\begin{array}{*{20}c}
 {M - r,\quad r \ge 0} \hfill \\
 {M + r,\quad r < 0} \hfill \\
\end{array} } = M - \vert r\vert } \right..
\end{array}
\end{equation}

Hence, rewriting (\ref{lemma2}) we have

\begin{eqnarray}
&&\frac{1}{(NM)^2}{E}\left[\mathop {\sup }\limits_{\omega,\upsilon }
\sum\limits_{\vert p\vert < N}\sum\limits_{\vert r\vert < M}
\sum\limits_{n \in S_N }\sum\limits_{m \in S_M }u(n,m)u(n-p,m-r)
e^{j(\omega p +\nu r)}
\right]\nonumber \\
&&=\frac{1}{(NM)^2}{E}\left[\sum\limits_{n =0 }^{N-1}\sum\limits_{m
=0 }^{M-1}u(n,m)^2+ \mathop {\sup }\limits_{\omega,\upsilon }
 \mathop{\sum\limits_{\vert
p\vert < N}}\limits_{p\neq0}\mathop{\sum\limits_{\vert r\vert <
M}}\limits_{r\neq0} \sum\limits_{n \in S_N }\sum\limits_{m \in S_M
}u(n,m)u(n-p,m-r) e^{j(\omega p +\nu r)}
   \right] \nonumber \\
   &&\leq \frac{1}{(NM)^2}\left\{NM\sigma^2+
 \mathop{\sum\limits_{\vert
p\vert < N}}\limits_{p\neq0}\mathop{\sum\limits_{\vert r\vert <
M}}\limits_{r\neq0}{E}\left[ \left| \sum\limits_{n \in S_N
}\sum\limits_{m \in S_M } u(n,m)u(n-p,m-r)\right|\right]\right\},
\label{lemma3}
\end{eqnarray}
where in the first equality we split up the sum into the squared
term and the remainder, and then employ the triangular inequality.

Let us investigate the second term on the R.H.S. of (\ref{lemma3}).
From the Cauchy-Schwartz inequality, for any r.v. $x$, ${E}\left[
\left|x\right|\right]\leq{E}\left[
\left|x\right|^2\right]^{\frac{1}{2}}$, hence
\begin{eqnarray}
&&{E}\left[ \left| \sum\limits_{n \in S_N }\sum\limits_{m \in S_M }
u(n,m)u(n-p,m-r)\right|\right]\leq{E}\left[ \left| \sum\limits_{n
\in S_N }\sum\limits_{m \in S_M }
u(n,m)u(n-p,m-r)\right|^2\right]^{\frac{1}{2}}\nonumber \\
&&=\left(\sum\limits_{n \in S_N }\sum\limits_{m \in S_M
}\sum\limits_{n' \in S_N }\sum\limits_{m' \in S_M
}{E}[u(n,m)u(n-p,m-r)u(n',m')u(n'-p,m'-r)] \right)^{\frac{1}{2}}\nonumber \\
&&=\left(\sum\limits_{n \in S_N }\sum\limits_{m \in
S_M}\sigma^4\right)^{\frac{1}{2}}=\sigma^2(N-|p|)^{\frac{1}{2}}(M-|r|)^{\frac{1}{2}}.
\end{eqnarray}
which follows from the observation that for $p,r\neq0$, the fourth
order moment of the field $\{u(n,m)\}$ equals zero for all $n\neq
n'$ or $m\neq m'$.

Hence we can finally write

\begin{eqnarray}
&&{E}\left[\mathop {\sup }\limits_{\omega,\upsilon } \left |
\frac{1}{NM}\sum\limits_{n = 0}^{N-1} \sum\limits_{m=0}^{M-1} u(n,m)
e^{j(\omega n +\nu m)}
  \right |^2 \right]\nonumber \\
  &&\leq\frac{1}{(NM)^2}\left\{NM\sigma^2+
 \mathop{\sum\limits_{\vert
p\vert < N}}\limits_{p\neq0}\mathop{\sum\limits_{\vert r\vert <
M}}\limits_{r\neq0}\sigma^2(N-|p|)^{\frac{1}{2}}(M-|r|)^{\frac{1}{2}}\right\}\nonumber
\\
&& \leq\frac{\sigma^2}{(NM)^2}\{NM+4(NM)^{\frac{3}{2}}\} \leq
\frac{K}{(NM)^{\frac{1}{2}}}=O(N^{-\frac{1}{2}}M^{-\frac{1}{2}}).
\label{lemma4}
\end{eqnarray}
where $K$ some finite positive constant.

Now following the ideas of Doob, \cite{Do}( ch. X, 6), let $R$ and
$S$ be some positive integers such that $N>R^\delta$, and
$M>S^\delta$, for $\delta>2$. Hence, for any such choice of $N$ and
$M$, from (\ref{lemma4}),
\begin{eqnarray}
&&{E}\left[\mathop {\sup }\limits_{\omega,\upsilon } \left |
\frac{1}{NM}\sum\limits_{n = 0}^{N-1} \sum\limits_{m=0}^{M-1} u(n,m)
e^{j(\omega n +\nu m)}
  \right |^2 \right]\leq
\frac{K}{(RS)^{\frac{\delta}{2}}}. \label{lemma5}
\end{eqnarray}

Hence, if we take $N=N(R)$ and $M=M(S)$ to be the smallest integers
not smaller then $R^\delta$ and $S^\delta$, respectively, then
(\ref{lemma5}) still holds.

Hence, by Chebyshev inequality for every $\epsilon>0$
\begin{eqnarray}
&&P\left(\mathop {\sup }\limits_{\omega,\upsilon } \left |
\frac{1}{N(R)M(S)}\sum\limits_{n = 0}^{N(R)-1}
\sum\limits_{m=0}^{M(S)-1} u(n,m) e^{j(\omega n +\nu m)}
  \right |\geq \epsilon\right)\nonumber \\
  &&\leq \frac{{E}\left[\mathop {\sup }\limits_{\omega,\upsilon }
\left | \frac{1}{N(R)M(S)}\sum\limits_{n = 0}^{N(R)-1}
\sum\limits_{m=0}^{M(S)-1} u(n,m) e^{j(\omega n +\nu m)}
  \right |^2 \right]}{\epsilon^2}\leq\frac{K}{\epsilon^2(RS)^{\frac{\delta}{2}}}
\end{eqnarray}
and then since $\delta>2$
\begin{eqnarray}
\sum_{R=1}^{\infty}\sum_{S=1}^{\infty}P\left(\mathop {\sup
}\limits_{\omega,\upsilon } \left | \frac{1}{N(R)M(S)}\sum\limits_{n
= 0}^{N(R)-1} \sum\limits_{m=0}^{M(S)-1} u(n,m) e^{j(\omega n +\nu
m)}
  \right
  |>\epsilon\right)\leq\sum_{R=1}^{\infty}\sum_{S=1}^{\infty}\frac{K}{\epsilon^2(RS)^{\frac{\delta}{2}}}<\infty.
\end{eqnarray}
Hence, by the Borel-Cantelly lemma,

\begin{equation}
\mathop {\sup }\limits_{\omega,\upsilon } \left |
\frac{1}{N(R)M(S)}\sum\limits_{n = 0}^{N(R)-1}
\sum\limits_{m=0}^{M(S)-1} u(n,m) e^{j(\omega n +\nu m)}
  \right| \to 0 \ \mbox{a.s.} \ \mbox{as} \ \Psi (R,S) \to \infty.
  \label{lemma6}
\end{equation}

Now,
\begin{eqnarray}
&&\mathop{\mathop{\sup }\limits_{N(R)\leq N \leq
N(R+1)}}\limits_{M(S)\leq M \leq M(S+1)} \mathop {\sup
}\limits_{\omega,\upsilon } \left | \frac{1}{NM}\sum\limits_{n =
0}^{N-1} \sum\limits_{m=0}^{M-1} u(n,m) e^{j(\omega n +\nu m)}
  -\frac{1}{NM}\sum\limits_{n = 0}^{N(R)-1}
\sum\limits_{m=0}^{M(S)-1} u(n,m) e^{j(\omega n +\nu m)}\right |
\nonumber\\
&&\leq\mathop{\mathop{\sup }\limits_{N(R)\leq N \leq
N(R+1)}}\limits_{M(S)\leq M \leq M(S+1)} \mathop {\sup
}\limits_{\omega,\upsilon } \frac{1}{NM}\left|\sum\limits_{n =
0}^{N(R)-1} \sum\limits_{m=M(S)}^{M-1} u(n,m)e^{j(\omega n +\nu
m)}\right|\nonumber \\
&&+\mathop{\mathop{\sup }\limits_{N(R)\leq N \leq
N(R+1)}}\limits_{M(S)\leq M \leq M(S+1)} \mathop {\sup
}\limits_{\omega,\upsilon } \frac{1}{NM}\left|\sum\limits_{n =
N(R)}^{N-1} \sum\limits_{m=0}^{M(S)-1} u(n,m)e^{j(\omega n +\nu
m)}\right| \nonumber \\
&& +\mathop{\mathop{\sup }\limits_{N(R)\leq N \leq
N(R+1)}}\limits_{M(S)\leq M \leq M(S+1)} \mathop {\sup
}\limits_{\omega,\upsilon } \frac{1}{NM}\left|\sum\limits_{n =
N(R)}^{N-1} \sum\limits_{m=M(S)}^{M-1} u(n,m)e^{j(\omega n +\nu
m)}\right| =I_1+I_2+I_3.
\end{eqnarray}

Consider the first term in the previous equation. Using the
triangular inequality
\begin{equation}
I_1\leq \frac{1}{M(S)}\sum\limits_{m=M(S)}^{M(S+1)-1} \left (
\mathop {\sup }\limits_{\omega} \frac{1}{N(R)}\left|\sum\limits_{n =
0}^{N(R)-1} u(n,m)e^{j\omega n}\right| \right).
\end{equation}
Let
\begin{equation}
\tilde u (m)= \mathop {\sup }\limits_{\omega}
\frac{1}{N(R)}\left|\sum\limits_{n = 0}^{N(R)-1} u(n,m)e^{j\omega
n}\right|.
\end{equation}
Since $\{u(n,m)\}$ is i.i.d., it is clear that $\{\tilde u (m)\}$ is
an i.i.d. sequence of random variables. Moreover, from \cite{hannan}
(or by repeating the derivation in (\ref{lemma2a})-(\ref{lemma5})
for the process $u(n,m)$ with a fixed $m$) we have
\begin{eqnarray}
&&{E}\left[\tilde u (m)^2\right]= {E}\left[ \mathop {\sup
}\limits_{\omega} \frac{1}{N(R)}\left|\sum\limits_{n = 0}^{N(R)-1}
u(n,m)e^{j\omega n}\right|^2\right]\leq
\frac{K_1}{R^{\frac{\delta}{2}}}.
\end{eqnarray}
 Taking the mean of the square of the $I_1$ we have
\begin{eqnarray}
&&{E}\left[|I_1|^2\right]\leq \frac{1}{M(S)^2}
\sum\limits_{m=M(S)}^{M(S+1)-1}\sum\limits_{m'=M(S)}^{M(S+1)-1}
{E}\left[\tilde u(m)\tilde u(m')\right] \nonumber \\
&&\leq \frac{1}{M(S)^2}
\sum\limits_{m=M(S)}^{M(S+1)-1}\sum\limits_{m'=M(S)}^{M(S+1)-1}
{E}\left[\tilde u(m)^2\right]^{\frac{1}{2}}{E}\left[\tilde u(m')^2\right]^{\frac{1}{2}} \nonumber \\
&&\leq\frac{K_1(M(S+1)-1-M(S))^2}{R^{\frac{\delta}{2}}M(S)^2}\leq
\frac{K}{R^{\frac{\delta}{2}}S^2}.
\end{eqnarray}

 Using once again the Chebyshev inequality and
the Borel-Cantelli lemma we have that $I_1\to 0$ a.s. as
$\Psi(R,S)\to \infty$. Repeating the same consideration for $I_2$ we
have that $I_2\to 0$ a.s. as $\Psi(R,S)\to \infty$. Finally, for
$I_3$ we have

\begin{eqnarray}
&&{E}[|I_3|^2]\leq {E}\left[ \left|\frac{1}{N(R)M(S)}\sum\limits_{n
= N(R)}^{N(R+1)-1} \sum\limits_{m=M(S)}^{M(R+1)-1} |u(n,m)|\right|^2
\right]\nonumber\\
&& = \frac{1}{(N(R)M(S))^2}\sum\limits_{n = N(R)}^{N(R+1)-1}
\sum\limits_{m=M(S)}^{M(S+1)-1}\sum\limits_{n' = N(R)}^{N(R+1)-1}
\sum\limits_{m'=M(S)}^{M(S+1)-1}{E}[|u(n,m)u(n',m')|] \nonumber \\
&&\leq
\sigma^2\frac{(N(R+1)-1-N(R))^2(M(S+1)-1-M(S))^2}{(N(R)M(S))^2}
 \leq \frac{K}{(RS)^2}.
\end{eqnarray}

Using again the Chebyshev inequality and the Borel-Cantelli lemma we
have that $I_3\to 0$ a.s. as $\Psi(R,S)\to \infty$.

Finally, we have that
\begin{equation}
\mathop {\sup }\limits_{\omega,\upsilon } \left |
\frac{1}{NM}\sum\limits_{n = 0}^{N-1} \sum\limits_{m=0}^{M-1} u(n,m)
e^{j(\omega n +\nu m)}
  -\frac{1}{NM}\sum\limits_{n = 0}^{N(R)-1}
\sum\limits_{m=0}^{M(S)-1} u(n,m) e^{j(\omega n +\nu m)}\right | \to
0 \ \mbox{a.s.}
\end{equation}
for all $N(R)\leq N <N(R+1)$ and $M(S)\leq M <M(S+1)$, as
$\Psi(R,S)\to \infty$, and hence as $\Psi(N,M)\to \infty$,

Since $\frac{N(R)}{N(R+1)}\to1$ and $\frac{M(S)}{M(S+1)}\to1$ as
$\Psi(R,S)\to \infty$ we can replace $\frac{1}{NM}$ in the second
term by $\frac{1}{N(R)M(S)}$. Therefore, we have
\begin{equation}
\mathop {\sup }\limits_{\omega,\upsilon } \left |
\frac{1}{NM}\sum\limits_{n = 0}^{N-1} \sum\limits_{m=0}^{M-1} u(n,m)
e^{j(\omega n +\nu m)}
  -\frac{1}{N(R)M(S)}\sum\limits_{n = 0}^{N(R)-1}
\sum\limits_{m=0}^{M(S)-1} u(n,m) e^{j(\omega n +\nu m)}\right | \to
0 \ \mbox{a.s.} \label{last}
\end{equation}
From (\ref{last}) and (\ref{lemma6}) the lemma  follows.

\end{proof}

%
%
\addcontentsline{toc}{section}{References}

\end{document}